\documentclass[12pt]{article}
\usepackage{amsfonts,array}
\usepackage{multicol,subcaption,makecell}
\usepackage[font=footnotesize,labelfont=bf]{caption}
\usepackage{afterpage}
\usepackage{amsmath,bbm,dsfont,mathrsfs,mathtools,appendix}
\usepackage{amssymb}
\usepackage{graphicx}
\usepackage{fullpage}%
\usepackage{enumitem}
\usepackage{physics}
\usepackage{framed}
\usepackage[table]{xcolor}
\usepackage{authblk}

\usepackage[colorlinks=true,linkcolor=blue,citecolor=green,plainpages=false,pdfpagelabels]%
{hyperref}%
\hypersetup{
	colorlinks,
	linkcolor={blue!60!green},
	citecolor={green!50!yellow!75!black},
	urlcolor={blue!80!black},
	linktoc=all
}
\usepackage{cleveref} 
\providecommand{\U}[1]{\protect\rule{.1in}{.1in}}

\newtheorem{theorem}{Theorem}

\newtheorem{corollary}[theorem]{Corollary}

\newtheorem{definition}[theorem]{Definition}

\newtheorem{lemma}[theorem]{Lemma}

\newtheorem{proposition}[theorem]{Proposition}

\newenvironment{proof}[1][Proof]{\noindent\textbf{#1.} }{\ \rule{0.5em}{0.5em}}

\newcommand{\im}{\operatorname{im}}

\newcommand{\M}[1]{\mathbb{M}_{#1}(\mathbb{C})}
\newcommand{\B}[1]{\mathcal{L}({#1})}
\newcommand{\iden}{\mathbbm{1}}

\usepackage{array}
\usepackage{amsmath,amssymb,mathtools,bm}
\usepackage{booktabs}
\usepackage{graphicx}

\usepackage{tikz}
\usetikzlibrary{arrows.meta,positioning,fit,calc}

\tikzset{
  qblock/.style={
    draw,
    rounded corners=2pt,
    minimum height=7mm,
    minimum width=16mm,
    align=center,
    font=\scriptsize,
    fill=black!3
  },
  qsmall/.style={
    draw,
    rounded corners=2pt,
    minimum height=6mm,
    minimum width=13mm,
    align=center,
    font=\scriptsize,
    fill=black!3
  },
  qarrow/.style={-{Latex[length=1.8mm]}, thick},
  qgroup/.style={
    draw,
    dashed,
    rounded corners=3pt,
    inner sep=3pt
  }
}

\newcommand{\id}{{\rm{id}}} 

\newcommand{\R}{\mathbbm{R}}
\newcommand{\C}{\mathbb{C}}

\newcommand{\N}{\mathbb{N}}
\newcommand{\cM}{\mathcal{M}}

\newcommand{\cT}{\mathcal{T}}

\newcommand{\cE}{\mathcal{E}}

\newcommand{\cN}{\mathcal{N}}

\newcommand{\cD}{\mathcal{D}}

\newcommand{\E}{\mathbb{E}}

\def\>{{\rangle}}
\def\<{{\langle}}
\newcommand{\be}{\begin{equation}}
	\newcommand{\ee}{\end{equation}}
\newcommand{\bea}{\begin{eqnarray}}
	\newcommand{\eea}{\end{eqnarray}}

\newcommand{\eps}{\varepsilon}

\newcommand{\comment}[1]{}

\newcommand{\supp}{\operatorname{supp}}
\newcommand{\CPTP}{{\rm{CPTP}}}

\newcommand{\Bsa}[1]{\mathcal{L}^{\operatorname{sa}}(#1)}

\newcommand{\ID}{\operatorname{ID}}

\numberwithin{equation}{section}
\numberwithin{theorem}{section}

\usepackage{color}
\definecolor{colorthree}{rgb}{0.01,0.51,0.93}

\usepackage{setspace}

\usepackage{newunicodechar}
\newunicodechar{ρ}{\rho}
\newunicodechar{β}{\beta}
\newunicodechar{σ}{\sigma}
\newunicodechar{λ}{\lambda}
\newunicodechar{Λ}{\Lambda}
\newunicodechar{μ}{\mu}
\newunicodechar{ν}{\nu}
\newunicodechar{ψ}{\psi}
\newunicodechar{ϕ}{\varphi}
\newunicodechar{Φ}{\cN}
\newunicodechar{φ}{\cN}
\newunicodechar{π}{\pi}
\newunicodechar{α}{\alpha}
\newunicodechar{ϵ}{\epsilon}
\newunicodechar{ε}{\varepsilon}
\newunicodechar{δ}{\delta}
\newunicodechar{ω}{\omega}
\newunicodechar{Δ}{\Delta}
\newunicodechar{Σ}{\Sigma}

\title{The entanglement-assisted transmission capacity is a strong converse bound for identification}
\author[1]{Satvik Singh}

\affil[1]{Department of Mathematics, Technical University of Munich, Garching, Germany}

\affil[1]{Munich Center for Quantum Science and Technology (MCQST), Munich, Germany}

\date{}

\begin{document}

\maketitle

\begin{abstract}
Classical identification via a noisy channel is a communication task in which the receiver is not required to reconstruct the full transmitted message, but only to decide whether it coincides with a message of interest. This relaxation allows the number of identifiable messages to grow doubly exponentially with the blocklength. For quantum channels, the resulting (doubly exponential) identification capacity $C_{\ID}$ can strictly exceed the ordinary (exponential) transmission capacity $C$. 

In this paper, we prove that the entanglement-assisted transmission capacity $C_E$ is a strong converse bound for this task: $C_{\ID}\leq C_E$. For sufficiently low-noise channels, this bound can also be achieved via the Hayden-Winter (quantum) identification + fingerprinting codes. This yields an exact characterization $C_{\ID}=C_E$ of identification capacity for such channels. However, for general channels, we prove that this upper bound can be strict. We exhibit an explicit family of transpose-depolarizing channels for which $C_{\ID}<C_E$. As a consequence, we also obtain the first example of strict superadditivity of the identification
capacity $C_{\ID}$.

\end{abstract}

\tableofcontents

\section{Introduction}

The operational meaning of information in a communication setting depends strongly on the question the receiver is asked to answer. In Shannon's original model of communication, the
receiver is required to reconstruct the message sent by the sender
\cite{Shannon1948, Cover2005book}, thus answering the question ``What was the transmitted message?" For a memoryless noisy channel used \(n\)
times, the maximum number of reliably transmissible messages grows exponentially
with \(n\), and the optimal exponential growth rate is the transmission capacity $C$. For classical channels, this is given by the maximum input-output mutual information of the channel \cite{Shannon1948}.
For quantum channels, the corresponding transmission capacity is given
by the regularized Holevo information \cite{Schumacher1997HSW,Holevo1998HSW,Wilde2016}.

Identification, introduced by Ahlswede and Dueck, asks for a different kind of communication \cite{Ahlswede1989ID}. Instead of reconstructing the message, the receiver
is only required to answer a binary question of the form: ``Was the transmitted
message equal to \(j\)?'' This relaxation has a dramatic consequence: the number
of identifiable messages can grow doubly exponentially with the blocklength $n$ \cite{Ahlswede1989ID}. The
corresponding optimal double-exponential growth rate is the identification capacity $C_{\ID}$. For
classical memoryless channels, the identification
capacity is equal to Shannon's transmission capacity $C=C_{\ID}$ \cite{Ahlswede1989ID}. Identification
over quantum channels was subsequently developed in \cite{Lober1999thesis-ID,Ahlswede2002strong-ID,
Winter2005identification-hybrid-memory,Winter2013survey-ID}.

Quantum channels make the identification problem richer than its classical
counterpart. For a classical channel, all tests in an identification code are automatically
compatible because the channel output is classical. For a quantum channel,
however, an identification decoder consists of independent binary tests
\(\{D_j,\iden-D_j\}\), one for each possible message \(j\). These tests need not
arise as coarse-grainings of a single POVM. Requiring such a common
refinement leads to the \emph{simultaneous} identification capacity
\(C_{\ID}^{\rm sim}\) \cite{Lober1999thesis-ID}. In general, Ahlswede and Dueck's construction \cite{Ahlswede1989ID} lifts any classical transmission code to an exponentially larger simultaneous identification code, which shows that $ C_{\ID}\geq C_{\ID}^{\rm sim}\geq C $.
Moreover, unrestricted quantum identification can exploit incompatible measurements, and this leads to phenomena with no classical analogue. For example, for the noiseless identity channel on
\(\C^d\),
\begin{equation}\label{eq:C(id)>Csim(id)}
2\log d=C_{\ID}(\id_d) >  C^{\rm sim}_{\ID} (\id_d) =   C(\id_d)= \log d,
\end{equation}
so the classical identification capacity of a quantum channel can be strictly larger
than its ordinary classical transmission capacity
\cite{Winter2005identification-hybrid-memory, Atif2024CIDstrongconverse, Colomer2025zero-entropy}.

A natural benchmark for this enhanced identification power is the
entanglement-assisted classical capacity \(C_E\). Operationally, this is the optimal (exponential) rate of reliable classical message transmission through a given channel when the sender and receiver have unlimited shared entanglement. Both \(C_{\ID}\) and \(C_E\) are equal for the noiseless channel, and more generally, for conditional expectations onto hybrid classical-quantum memories \cite{Winter2005identification-hybrid-memory}. The connection is further strengthened by the task of quantum identification. In this
task, for every
pure output state \(\varphi\), Bob is required to simulate the binary test $(\varphi,\iden-\varphi)$, i.e. answer the question ``Is the transmitted state equal to $\varphi$?'' on the noisy outputs with approximately the same statistics as in the noiseless case. Hayden and Winter showed that quantum identification is governed by a weak
decoupling principle: Bob can identify the input state whenever the environment
approximately forgets its identity \cite{Hayden2012QID-achievability}. Their
capacity theorem implies that, for sufficiently low-noise channels (see Definition~\ref{def:low-noise}), the
quantum identification capacity $Q_{\ID}$ equals \(C_E\). Combining this with quantum fingerprinting
\cite{Buhrman2001fingerprint} gives the following achievability bound
\begin{equation}\label{eq:CID>=QID>=C_E-intro} 
C_{\ID}(\cN)\geq Q_{\ID}(\cN)= C_E(\cN)    
\end{equation}
for such low-noise channels $\cN$ (see Section~\ref{sec:CID-QID-CE} for details).

This led Winter to raise the possibility that
\begin{equation}\label{eq:CID>=CE-intro}
   C_{\ID}(\cN)\ge C_E(\cN)
\end{equation}
might hold for all quantum channels \cite{Winter2013survey-ID}. The question
has remained open partly because of the fact that obtaining converse bounds for
identification has proved to be difficult. Prior to the current work, no matching upper bound of the form $C_{\ID}(\cN)\leq C_E(\cN)$ was known. General converse bounds are available, for instance via quantum soft covering \cite{Atif2024CIDstrongconverse}, but they typically contain dimension-dependent terms and are not tight in very noisy regimes. Recent work has developed sharper geometric converse bounds \cite{singh2026gaussianmeanwidthstrong} (see also \cite{ye2026strongconverseboundsclassical}). Nevertheless, an exact general formula for \(C_{\ID}\) remains out of reach, and all existing examples support \eqref{eq:CID>=CE-intro}.

\subsection{Our contribution}
In this paper, we obtain two main results.
\begin{itemize}
    \item First, we establish the universal strong converse bound (c.f. Theorem~\ref{theorem:CID-leq-CE})
    \begin{equation}\label{eq:CID<=CE-intro}
    C_{\ID}(\cN)\leq C_E(\cN)    
    \end{equation}
    for all quantum channels $\cN$.
    \item Second, we construct explicit channels $\cN$ for which this converse bound is strict: $C_{\ID}(\cN) < C_E(\cN)$, thus proving that \eqref{eq:CID>=CE-intro} \emph{cannot} hold universally (c.f. Proposition~\ref{prop:main}). As a consequence, we also obtain, to the best of our knowledge, the first example of non-additivity of $C_{\ID}$ (c.f. Corollary~\ref{cor:CID-nonadditive}).
\end{itemize}

Combining the universal converse \eqref{eq:CID<=CE-intro} with the Hayden-Winter achievability \eqref{eq:CID>=QID>=C_E-intro} gives a complete characterization of identification capacity for low-noise channels (c.f. Corollary~\ref{cor:low-noise}):
\begin{equation}
    C_{\ID}(\cN)
    =Q_{\ID}(\cN)
    =C_E(\cN).
\end{equation}

\subsection{Proof techniques}

The proof of the universal converse $C_{\ID}\leq C_E$ builds on the Euclidean geometric method based on the notion of Gaussian mean width introduced recently in
\cite{singh2026gaussianmeanwidthstrong}. The new technical step is to relate this geometric converse bound to the channel max information \cite{Fang2020channel-max-information} (see Lemma~\ref{lemma:mu-star-Imax}), followed by appropriate smoothing and use of the asymptotic equipartition property (AEP) of the channel smooth max-information
\cite[Theorem 8]{Fang2020channel-max-information} (see Theorem~\ref{theorem:CID-leq-CE}). 

Next, to obtain the strict separation $C_{\ID} < C_E$, we consider the depolarizing and transpose-depolarizing channels $\cD_p, \cD_q^{\mathsf T}:\M{d}\to \M{d}$:
\begin{align}
    \cD_p(X)
    &=
    (1-p)X+p\Tr(X)\frac{\iden}{d} \\
    \cD_q^{\mathsf T}(X)
    &=
    (1-q)X^{\mathsf T}+q\Tr(X)\frac{\iden}{d}.
\end{align}
A simple yet crucial observation is that since the two channels are related via transposition:
\begin{equation}
    (\cD_p^{\mathsf T})^{\otimes n} = {\mathsf T}^{\otimes n} \circ \cD_p^{\otimes n},
\end{equation}
which maps the set of quantum states bijectively onto itself, their identification capacities must coincide (see Proposition~\ref{prop:main}). A direct application of the converse $C_{\ID}\leq C_E$ then yields the desired separation at $p=q=\frac{d}{d+1}$:
   \begin{align}
        C_{\ID}\left(\cD_{\frac{d}{d+1}}^{\mathsf T}\right) = C_{\ID}\left(\cD_{\frac{d}{d+1}}\right)  &\leq C_{E}\left(\cD_{\frac{d}{d+1}}\right) \\ 
        &=\log d-\frac{d-1}{d}\log(d+1) \\
        & < 1 + \log \frac{d}{d+1} = C_E \left(\cD_{\frac{d}{d+1}}^\top \right).
    \end{align}
Note that 
\begin{equation}
    C_{\ID}\left(\cD_{\frac{d}{d+1}}^{\mathsf T}\right) \longrightarrow 0 \quad \text{while} \quad C_E \left(\cD_{\frac{d}{d+1}}^\top \right)\longrightarrow 1 \quad \text{as} \quad d\to \infty.
\end{equation}
Intuitively, the separation between $C_{\ID}$ and $C_E$ for this transpose-depolarizing channel arises from the fact that while the channel
is almost completely noisy for large $d$ on bare inputs, it retains non-trivial correlations when applied on half of a maximally entangled state. This observation also motivates us to formulate the stronger `image-optimized' version of the $C_E$ converse, see Section~\ref{sec:discussion} for details.

As a consequence, we also obtain the first example of non-additivity of $C_{\ID}$. Tensoring the transpose-depolarizing channel with a sufficiently large noiseless channel makes the product channel low-noise in the sense of the Hayden-Winter quantum identification theorem \eqref{eq:CID>=QID>=C_E-intro}. The product channel therefore achieves its entanglement-assisted rate for identification via \eqref{eq:CID>=QID>=C_E-intro}, while the transpose-depolarizing factor alone satisfies $C_{\ID} <C_E$. This yields strict superadditivity of $C_{\ID}$ (see Corollary~\ref{cor:CID-nonadditive}).

\subsection{Outline}

The paper is organized as follows. 
\begin{itemize}
    \item In Section~\ref{sec:prelim}, we collect the preliminary material required for the rest of the paper.
\item In Section~\ref{sec:main}, we state and prove our main results. 
\item  In Section~\ref{sec:discussion}, we conclude with a discussion.
\end{itemize}

\section{Preliminaries} \label{sec:prelim}

\hspace{3pt} Consider a scenario in which two parties, say Alice (sender) and Bob (receiver), want to use a noisy quantum channel $\cN:\B{A}\to \B{B}$ to communicate information between them. Here, $A\cong \C^{d_A}, B\cong \C^{d_B}$ are finite-dimensional complex Hilbert spaces, $\B{A}$ denotes the space of all linear operators acting
on $A$, and $\cN$ is a linear, completely positive, and trace-preserving map. The real space of Hermitian operators in $\B{A}$ is denoted by $\Bsa{A}$. The Hilbert-Schmidt inner product on $\B{A}$ is denoted by
\begin{equation}\label{eq:HS}
    \langle X,Y\rangle := \Tr (X^{\dagger}Y), \qquad X,Y\in \B{A},
\end{equation}
and the Hilbert-Schmidt adjoint of $\cN:\B{A}\to \B{B}$ is defined by the relation
\begin{equation}
    \langle Y,\cN(X)\rangle
    =
    \langle\cN^{*}(Y),X\rangle, \qquad X\in \B{A}, Y\in \B{B}.
\end{equation}
The trace-norm, Schatten $\alpha$-norm for $\alpha\in(1,\infty)$, and the operator
norm on $\B{A}$ are denoted by $\norm{\cdot}_1$, $\norm{\cdot}_\alpha$, and
$\norm{\cdot}_\infty$, respectively \cite{Bhatia1997matrix}. We will use $\cD(A)$ to denote the set of quantum states acting on $A$, i.e. positive semi-definite operators in $\Bsa{A}$ with unit trace. The set of full-rank states in $\cD(A)$ is denoted by $\cD_+(A)$. We denote the cardinality of a set $S$ by $\abs{S}$. For $n\in\N$, we denote $[n]:=\{0,1,2,\ldots,n-1\}$. 

\subsection{Transmission vs identification}

In the task of classical information \emph{transmission} from Alice to Bob, Alice first encodes a message $i\in \{0,1,\ldots ,N-1\}=:[N]$ from a list of possible messages in a quantum state $\rho_i\in \cD(A^{\otimes n})$, sends it via $\cN^{\otimes n}$ to Bob, who performs a measurement via a POVM $\{\Lambda_i\}_{i\in [N]}\subseteq \B{B^{\otimes n}}$, $\Lambda_i\geq 0$, $\sum_i \Lambda_i = \iden_{B^{\otimes n}}$ in an attempt to decode exactly which message Alice intended to send (see Figure~\ref{fig:transmission}). It is known from the original work of Shannon \cite{Shannon1948, Cover2005book, Wilde2016} that the maximum number of messages that can be reliably transmitted via $n$ independent uses of $\cN$ scales \emph{exponentially} with $n$. This motivates the definition of transmission capacity.

\begin{figure}[ht]
\centering

\resizebox{\linewidth}{!} {%
\begin{tikzpicture}[node distance=4mm and 8mm]

\node (msg) {$i\in[N]$};
\node[qblock, right=of msg] (enc) {Encoder};
\node[right=of enc] (rho) {$\rho_i$};
\node[qblock, right=of rho] (chan) {Noisy channel\\ $\mathcal N^{\otimes n}$};
\node[right=of chan] (outstate) {$\mathcal N^{\otimes n} (\rho_i)$};
\node[qblock, right=of outstate] (dec) {Decoder POVM\\$\{\Lambda_i\}_{i=1}^N$};
\node[right=of dec] (what) {$\hat{i}\in[N]$};

\draw[qarrow] (msg) -- (enc);
\draw[qarrow] (enc) -- (rho);
\draw[qarrow] (rho) -- (chan);
\draw[qarrow] (chan) -- (outstate);
\draw[qarrow] (outstate) -- (dec);
\draw[qarrow] (dec) -- (what);

\node[qgroup, fit=(msg)(enc)(rho), label={[font=\scriptsize]above:sender}] {};
\node[qgroup, fit=(outstate)(dec)(what), label={[font=\scriptsize]above:receiver}] {};

\node[below=3mm of dec, font=\scriptsize]
{\(\Pr[\hat{i}= i] \geq 1-\lambda \) large};

\node[below=3mm of enc, font=\scriptsize]{Code parameters $(n,N,\lambda)$};

\end{tikzpicture}%
}

\caption{Schematic of classical message transmission over $n$ uses of a noisy channel $\cN$. A message $i\in [N]$ is encoded into a quantum state $\rho_i$, transmitted through $n$ uses of the channel, and decoded by a POVM to produce an estimate $\hat{i}$. }
\label{fig:transmission}
\end{figure}

\begin{definition}\label{def:Transmission-codes}
    A $(n,N,\lambda)$ (classical) \emph{transmission code} for a channel $\cN: \B{A}\to \B{B}$ is defined by encoding quantum states $\{\rho_i\}_{i\in [N]} \subseteq \cD (A^{\otimes n})$ and a decoding measurement $\{\Lambda_i\}_{i\in [N]} \subseteq \B{B^{\otimes n}}$, $\Lambda_i\geq 0$, $\sum_i \Lambda_i = \iden_{B^{\otimes n}}$, such that
    \begin{align}\label{eq:lambda}
        \forall i \in [N] : \qquad \Tr ( \cN^{\otimes n}(\rho_i) \Lambda_i ) &\geq 1-\lambda .
    \end{align}
    For a given $n\in \N, \lambda\in [0,1)$, the maximum size of all $(n,N,\lambda)$ codes for $\cN$ is 
    \begin{equation}
        N_{(n, \lambda)}(\cN) := \max \{N : \exists (n,N, \lambda) \text{ classical transmission code for } \cN \},
    \end{equation}
    and the \emph{classical transmission capacity} is defined as 
    \begin{align}
        C(\cN) := \inf_{0<\lambda <1} \liminf_{n\to \infty} \frac{1}{n} \log N_{(n, \lambda)}(\cN) .
    \end{align}
    If the sender and receiver also have access to unlimited pre-shared entanglement, the corresponding maximum number of classical
messages that can be transmitted via $n$ uses of $\cN$ with maximal decoding error at most
$\lambda$ is denoted by
$N^E_{(n,\lambda)}(\cN)$. The associated \emph{entanglement-assisted classical transmission
capacity} is defined as (see \cite{Wilde2016, watrous2018theory} for precise definitions):
\begin{equation}\label{eq:CE-operational-definition}
    C_E(\cN)
    :=
    \inf_{0<\lambda<1}
    \liminf_{n\to\infty}
    \frac1n
    \log N^E_{(n,\lambda)}(\cN).
\end{equation}
\end{definition}

In the task of classical message \emph{identification}, instead of decoding the full message, Bob is only interested in deciding/identifying whether the sent message $i$ is equal to a fixed message $j$ or not (see Figure~\ref{fig:identification}). This small change leads to a drastic difference in how the code size scales with the block-length. Indeed, the maximum number of messages that can be reliably identified via $n$ independent uses of $\cN$ scales \emph{doubly exponentially} with $n$ \cite{Ahlswede1989ID, Lober1999thesis-ID, Ahlswede2002strong-ID}. In order to see this, we start with an ordinary transmission code with $N$ codewords for $n$ uses of the channel, where $N\approx 2^{nR}$ grows exponentially in $n$. We then choose a large family of subsets $S_1,\ldots,S_{N_{\rm ID}}\subseteq [N]$, each of the same size, but with pairwise intersections much smaller than their size. To send identification message $j$, Alice randomly selects one transmission codeword from $S_j$. The decoder for query $j$ accepts if the ordinary transmission decoder outputs an index belonging to $S_j$. The small overlap condition ensures that the probability of falsely accepting $j$ when another message was sent is small, while reliable transmission ensures that $j$ is accepted with high probability when it was sent. Since the number of such almost-disjoint subsets can be shown to be exponential in $N$, the resulting number of identification messages is doubly exponential in $n$: $N_{
\rm ID} \approx 2^{N} \approx 2^{2^{nR}}$. We refer the reader to \cite{Ahlswede1989ID, Colomer2025zero-entropy} for more details. This motivates the definition of identification capacity.

\begin{figure}[ht]
\centering
\resizebox{\linewidth}{!}{%
\begin{tikzpicture}[node distance=2mm and 8mm]

\node (msg) {$i\in[N]$};
\node[qblock, right=of msg] (enc) {Encoder};
\node[right=of enc] (rho) {$\rho_i$};
\node[qblock, right=of rho] (chan) {Noisy channel\\ $\mathcal N^{\otimes n}$};
\node[right=of chan] (outstate) {$\mathcal N^{\otimes n}(\rho_i)$};
\node[qblock, right=of outstate] (test) {Binary test\\POVM \(\{D_j,\iden -D_j\}\)};
\node[right=of test] (ans) {yes/no};

\draw[qarrow] (msg) -- (enc);
\draw[qarrow] (enc) -- (rho);
\draw[qarrow] (rho) -- (chan);
\draw[qarrow] (chan) -- (outstate);
\draw[qarrow] (outstate) -- (test);
\draw[qarrow] (test) -- (ans);

\node[above=3mm of test, font=\scriptsize] (query)
  {query: ``Was the message \(j\)?''};

\node[below=6mm of test, font=\scriptsize] (jinput) {$j\in[N]$};
\draw[qarrow] (jinput) -- (test);

\node[qgroup, fit=(msg)(enc)(rho), label={[font=\scriptsize]above:sender}] {};
\node[qgroup, fit=(outstate)(test)(ans)(jinput), label={[font=\scriptsize]above:receiver}] {};

\node[below=3mm of outstate, align=left, font=\scriptsize]
{\(\Pr(\mathrm{yes}\mid i=j) \geq 1-\lambda_1\) large \\
 \(\Pr(\mathrm{yes}\mid i\neq j)\leq \lambda_2\) small.};

\node[below=3mm of enc, font=\scriptsize]{Code parameters $(n,N,\lambda_1, \lambda_2)$};

\end{tikzpicture}%
}

\caption{Schematic of classical message identification over \(n\) uses of a noisy quantum channel $\cN$. A message \(i\in[N]\) is encoded into a quantum state \(\rho_i\), transmitted via $n$ uses of the channel, and the receiver performs the binary test \(\{D_j,\iden-D_j\}\) to decide whether the transmitted message was \(j\in [N]\).}
\label{fig:identification}
\end{figure}

\begin{definition}\label{def:IDcodes}
    A $(n,N,\lambda_1,\lambda_2)$ (classical) \emph{identification (ID) code} for a quantum channel $\cN: \B{A}\to \B{B}$ is defined by pairs $\{(\rho_i, D_i) \}_{i\in [N]}$ of encoding states $\rho_i \in \cD (A^{\otimes n})$ and decoding effects $D_i\in \B{B^{\otimes n}}$, $0_{B^{\otimes n}} \leq D_i \leq \iden_{B^{\otimes n}}$, such that
    \begin{align}
        \forall i \in [N] : \qquad \Tr ( \cN^{\otimes n}(\rho_i) D_i ) &\geq 1-\lambda_1, \label{eq:lambda_1} \\ 
        \forall j\neq i\in [N] : \qquad \Tr (\cN^{\otimes n}(\rho_i) D_j) &\leq \lambda_2  \label{eq:lambda_2}.
    \end{align}
    For a given $n\in \N, \lambda_1,\lambda_2\in [0,1)$, the maximum size of all $(n,N,\lambda_1, \lambda_2)$ codes for $\cN$ is\footnote{To avoid trivialities, we will always assume that $\lambda_1+\lambda_2<1$.} 
    \begin{equation}
        N_{(n, \lambda_1, \lambda_2)}(\cN) := \max \{N : \exists (n,N, \lambda_1, \lambda_2) \text{ classical ID code for } \cN \},
    \end{equation}
    and the classical \emph{ID capacity} is defined as
    \begin{align}
        C_{\ID}(\cN) &:= \inf_{\substack{\lambda_1, \lambda_2 >0 \\ \lambda_1+\lambda_2<1}} \liminf_{n\to \infty} \frac{1}{n} \log \log N_{(n, \lambda_1, \lambda_2)}(\cN) .
    \end{align}
\end{definition}

If the binary tests
\(\{D_i\}_{i\in [N]}\) in the above definition arise as coarse-grainings of a single common POVM, then the corresponding \emph{simultaneous} identification capacity is denoted by
\(C_{\ID}^{\rm sim}(\cN)\) \cite{Lober1999thesis-ID}. From the previous discussion, any transmission code can be lifted to an exponentially large simultaneous identification code \cite{Ahlswede1989ID}. Hence, for all channels $\cN$,
\begin{equation}\label{eq:CID>=C}
    C_{\ID}(\cN) \geq C_{\ID}^{\rm sim}(\cN) \geq C(\cN).
\end{equation}

\subsection{Quantum identification} \label{sec:CID-QID-CE}

For low-noise channels, the achievability bound from Eq.~\eqref{eq:CID>=C} can be improved. In order to show this, we first recall the notion of quantum identification \cite{Winter2005identification-hybrid-memory, Hayden2012QID-achievability}. Let $K\cong \C^{d_K}$ and denote by $\mathbb P(K)$ the set of rank-one projectors (or pure states) on $K$. A \emph{visible} $\varepsilon$-quantum identification code for $\cN^{\otimes n}$ with code space $K$ consists of an encoding map $\mathcal E:\mathbb P(K)\to \cD(A^{\otimes n})$ and, for every $\varphi\in\mathbb P(K)$, a binary test $0\leq D_\varphi\leq \iden_{B^{\otimes n}}$ such that, for all $\psi,\varphi\in\mathbb P(K)$,
\begin{equation}\label{eq:QID-eps}
\left| \Tr(\psi\varphi)
-
\Tr\left[\cN^{\otimes n}(\mathcal E(\psi))D_\varphi\right]
\right|
\leq \varepsilon .
\end{equation}
Thus, after receiving the channel output, Bob must be able to simulate the two-outcome test $(\varphi,\iden-\varphi)$, i.e. answer the question: ``Is the transmitted state equal to $\varphi$?'' with approximately the same statistics as in the noiseless case. If the encoding map extends to a completely positive trace-preserving map (quantum channel) $\mathcal E:\cD(K)\to\cD(A^{\otimes n})$, the code is called \emph{blind}. We denote by $M_{(n,\varepsilon)}(\cN)$ and $M^v_{(n,\varepsilon)}(\cN)$ the largest possible dimensions of $K$ for $\eps$-blind and $\eps$-visible quantum ID codes for $n$ uses of $\cN$, respectively. The corresponding capacities are defined as
\begin{align}
Q_{\ID}(\cN)
&:=
\inf_{\varepsilon>0}
\liminf_{n\to\infty}
\frac{1}{n}\log M_{(n,\varepsilon)}(\cN), \\
Q_{\ID,v}(\cN)
&:=
\inf_{\varepsilon>0}
\liminf_{n\to\infty}
\frac{1}{n}\log M^v_{(n,\varepsilon)}(\cN).
\end{align}

The connection with classical identification is obtained by concatenating a quantum ID code $K$ with a \emph{fingerprinting} construction \cite{Buhrman2001fingerprint}. Choose a large family of pure states $\{\psi_i \}_{i\in [N]}\subset\mathbb P(K)$ with small pairwise overlaps. To send the classical message $i$, Alice encodes the state $\psi_i$ using the quantum ID encoder $\cE:\mathbb P(K)\to \cD(A^{\otimes n})$. To test whether the message was $j$, Bob uses the quantum ID test $D_{\psi_j}$. If $i=j$, the noiseless acceptance probability is $\Tr(\psi_i\psi_i)=1$, so the quantum ID condition \eqref{eq:QID-eps} gives acceptance probability at least $1-\varepsilon$. If $i\neq j$, the noiseless acceptance probability $\Tr(\psi_i\psi_j)= \abs{\langle \psi_i | \psi_j\rangle}^2$ is small by the fingerprinting property, and the quantum ID condition \eqref{eq:QID-eps} adds only $\varepsilon$ error. Since one can choose exponentially many such fingerprints in the dimension of $K$ \cite{Buhrman2001fingerprint}, a quantum ID code with $d_K\approx 2^{nR}$ gives a classical ID code with $N\approx 2^{2^{nR}}$. Hence (see \cite{Hayden2012QID-achievability, Winter2013survey-ID} for details):
\begin{equation}\label{eq:CID>=QID}
C_{\ID}(\cN)\geq Q_{\ID,v}(\cN)\geq Q_{\ID}(\cN).
\end{equation}

We now recall the capacity formula for (blind) quantum identification. For an input pure state $|\phi\rangle_{RA}$, set $\omega_{RB}=(\id_R\otimes\cN_{A\to B})(\phi_{RA})$, and write $I(R:B)_\omega:= S(R)_\omega + S(B)_\omega - S(RB)_{\omega}$ for the \emph{mutual information} and $I(R\rangle B)_\omega=S(B)_\omega-S(RB)_\omega$ for the \emph{coherent information} \cite{Wilde2016}.  Using a random coding argument along with the weak decoupling principle \cite[Theorem 1]{Hayden2012QID-achievability}, Hayden and Winter showed that the quantum ID capacity is governed by the mutual information, but only over inputs with positive coherent information \cite{Hayden2012QID-achievability}:
\begin{equation}\label{eq:QID=I(R:B)}
Q_{\ID}(\cN)
=
\lim_{n\to\infty}
\frac{1}{n}
Q_{\ID}^{(1)}(\cN^{\otimes n}),
\qquad
Q_{\ID}^{(1)}(\cN)
=
\sup_{\phi: I(R\rangle B)_\omega>0}
I(R:B)_\omega, 
\end{equation}
where the $\sup$ is declared to be zero if there are no inputs $\phi$ with $I(R\rangle B)_{\omega}>0$. Note that the fully optimized mutual information is exactly the entanglement-assisted classical transmission capacity of the channel \cite{Bennett1999BSST, Wilde2016, watrous2018theory}:
\begin{equation}\label{eq:C_E=I}
     C_E(\cN) = \max_{\phi} I(R:B)_{\omega}.
\end{equation}

\begin{definition}[Low-noise channels] \cite[Remark 16]{Winter2013survey-ID} \label{def:low-noise}
     We say that a channel $\cN:\B{A}\to \B{B}$ has sufficiently \emph{low noise} if there exists an input $|\phi_\star\rangle_{RA}$ maximizing $I(R:B)_\omega$ in \eqref{eq:C_E=I} such that $I(R\rangle B)_{\omega_\star}>0$, where $\omega_\star := (\id_R \otimes \cN_{A\to B})(\phi_{RA})$.
\end{definition}

For low noise channels, the entanglement-assisted classical capacity is given by \cite{Bennett1999BSST, Wilde2016}
\begin{equation}
    C_E(\cN) = \max_{\phi} I(R:B)_{\omega} = I(R:B)_{\omega_\star}
\end{equation}
and since the corresponding coherent information $I(R\rangle B)_{\omega_\star}>0$, the state is also admissible in the formula \eqref{eq:QID=I(R:B)}. Combining this with the additivity of $C_E$ \cite{Wilde2016} yields
\begin{equation}
Q_{\ID}(\cN)
=
C_E(\cN).
\end{equation}
Combining this further with \eqref{eq:CID>=QID} shows that for any such low-noise channel,
\begin{equation}\label{eq:CID>=QID>=C_E}
C_{\ID}(\cN)
\geq
Q_{\ID,v}(\cN)
\geq
Q_{\ID}(\cN)
=
C_E(\cN),
\end{equation}
thus improving the lower bound \eqref{eq:CID>=C} for such channels.
Motivated by this, the question of whether the bound 
\begin{equation}\label{eq:CID>=CE}
    C_{\ID}(\cN)\geq C_E(\cN)
\end{equation}
holds universally for all channels was posed in \cite{Winter2013survey-ID}. We quote: ``In fact, the few cases for which $C_{\ID}$ is known are consistent with the idea
that it is always equal to the entanglement-assisted classical capacity of the
channel. One might speculate that $C_{\ID}(\cN)\geq Q_{\ID, v}(\cN)\geq C_E(\cN)$ be
true for all channels, seeing that for sufficiently low noise we can prove it, and
that it is true for the amortized classical ID-capacity" \cite{Winter2013survey-ID}.

\subsection{Gaussian mean width Euclidean converse}

With the two achievability bounds \eqref{eq:CID>=C} and \eqref{eq:CID>=QID} on identification in place, let us now turn to the converse bounds. In this section, we will briefly describe the geometric technique developed recently in \cite{singh2026gaussianmeanwidthstrong} to obtain strong converse bounds on identification. For the other quantum soft covering based converse, we refer the reader to \cite{Atif2024CIDstrongconverse}.

Let us begin by recalling the basic principle behind proving converse bounds for identification. Let $\{(\rho_i,D_i)\}_{i\in [N]}$ be an $(n,N,\lambda_1,\lambda_2)$ ID code for a channel $\cN:\B{A}\to \B{B}$. Set $\omega_i := \cN^{\otimes n}(\rho_i)$. By the defining properties of the code and the variational characterization of the trace norm, we get for $i\neq j$:
\begin{align}
    d_{\Tr}(\omega_i,\omega_j)
    :=
    \frac{1}{2}\norm{\omega_i-\omega_j}_1
    &=
    \max_{0\leq D \leq \iden_{B^{\otimes n}}} \Tr\bigl((\omega_i-\omega_j)D\bigr) \\
    &\geq
    \Tr\bigl((\omega_i-\omega_j)D_i\bigr) \\
    &\geq 1- \lambda_1-\lambda_2. \label{eq:CID-separation} 
\end{align}
Hence, the output states $\{\omega_i\}_{i\in [N]}$ of any $(n,N,\lambda_1,\lambda_2)$ ID code for $\cN$ form a large separated subset (i.e. packing) of the image set
\begin{equation}\label{eq:N-image}
I_n(\cN):=\cN^{\otimes n} (\cD(A^{\otimes n})) = \{\cN^{\otimes n}(\rho) : \rho \in \cD(A^{\otimes n}) \} \subseteq \Bsa{B^{\otimes n}}
\end{equation}
with respect to the trace distance. Let us formally define the covering and packing numbers.

\begin{definition}\label{def:packing-covering}
Let $(X,d)$ be a metric space, let $S\subseteq X$ be a bounded subset. The closed ball of radius $\eps>0$ around $x\in X$ is denoted by $B_\eps(x) := \{ y\in X : d(x,y)\leq \eps \}$.
\begin{itemize}
    \item A (finite) subset $P\subseteq S$ is called an $\eps$\emph{-packing} of $S$ if $\forall x\neq y\in \mathscr{P}, \,d(x,y)>\eps$. The \emph{$\eps$-packing number} of $S$ with respect to the metric $d$ is defined by
\begin{equation}
    \mathscr P_{\eps}(S;d) := \max \left\{ |\mathscr{P}| : \mathscr{P}\subseteq S \, \text{ is an } \eps\text{-packing} \right\}.
\end{equation} 
    \item A (finite) subset $C\subseteq S$ is called an $\eps$\emph{-covering} of $S$ if $S \subseteq \bigcup_{x\in \mathscr{C}} B_\eps(x)$. The \emph{$\eps$-covering number} of $S$ with respect to the metric $d$ is defined by
\begin{equation}
    \mathscr C_{\eps}(S;d) := \min \left\{ |\mathscr{C}| : \mathscr{C}\subseteq S \, \text{ is an } \eps\text{-covering} \right\}.
\end{equation} 
\end{itemize}
For any $S\subseteq X$, the following holds true \cite{Vershynin2018HDP, Avidan2015analysis}:
   \begin{equation}\label{eq:cover-pack-inequality}
        \mathscr P_{2\eps}(S;d) \leq \mathscr C_{\eps}(S;d) \leq \mathscr P_{\eps}(S;d).
    \end{equation}
\end{definition}

We can now write the necessary condition in \eqref{eq:CID-separation} as follows.

\begin{lemma}\label{lemma:CID-separation}
For a channel $\cN:\B{A}\to \B{B}$, $\lambda_1,\lambda_2>0$ with $\lambda_1+\lambda_2<1$, and every  $0<\zeta< 1-\lambda_1-\lambda_2$, the following holds true:
\begin{equation}
    N_{(n,\lambda_1,\lambda_2)}(\cN)
    \leq \mathscr{P}_{\zeta}(I_n (\cN) ; d_{\Tr}) \leq 
    \mathscr{C}_{\zeta/2}\bigl(I_n(\cN); d_{\Tr}).
\end{equation}
\end{lemma}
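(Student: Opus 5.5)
The plan is to prove the two inequalities separately. The first comes from the separation bound \eqref{eq:CID-separation}. The second is the general packing–covering comparison \eqref{eq:cover-pack-inequality}.

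For the first inequality, take an optimal code, i.e. an $(n,N,\lambda_1,\lambda_2)$ ID code $\{(\rho_i,D_i)\}_{i\in[N]}$ with $N=N_{(n,\lambda_1,\lambda_2)}(\cN)$, and set $\omega_i:=\cN^{\otimes n}(\rho_i)\in I_n(\cN)$. By \eqref{eq:CID-separation}, for every $i\neq j$ we have
\[
d_{\Tr}(\omega_i,\omega_j)\geq 1-\lambda_1-\lambda_2>\zeta .
\]
Two consequences follow. First, the $\omega_i$ are pairwise distinct, since their distance is strictly positive; hence the set $\{\omega_i\}$ really has $N$ elements. Second, this set is a $\zeta$-packing of $I_n(\cN)$ in the sense of Definition~\ref{def:packing-covering}, where the strict inequality in the packing condition is exactly what the hypothesis $\zeta<1-\lambda_1-\lambda_2$ provides. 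Therefore $N\leq \mathscr P_\zeta(I_n(\cN);d_{\Tr})$.

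There is one point of care. $N_{(n,\lambda_1,\lambda_2)}$ is defined as a maximum, so we must know it is finite; otherwise the argument above should be read for every finite code size. Either way, the same argument bounds every achievable $N$ by the packing number, which is therefore an upper bound on the supremum. The packing number is itself finite because $I_n(\cN)$ is a bounded subset of a finite-dimensional space, being contained in the set of states.

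For the second inequality, apply the left inequality of \eqref{eq:cover-pack-inequality} with $\eps=\zeta/2$, which gives $\mathscr P_{\zeta}\leq \mathscr C_{\zeta/2}$. To make this self-contained, I would recall the standard argument. Each closed ball $B_{\zeta/2}(x)$ around a point $x$ of a $\zeta/2$-covering contains at most one point of a $\zeta$-packing. Indeed, two packing points $y,y'$ in the same ball would satisfy $d(y,y')\leq d(y,x)+d(x,y')\leq\zeta$ by the triangle inequality, contradicting the packing condition. Since the balls of the covering cover all of $I_n(\cN)$, the map sending each packing point to a ball containing it is injective, and the inequality follows.

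No step is a real obstacle. The only subtleties are the strict-versus-non-strict inequality conventions, which the choice $\zeta<1-\lambda_1-\lambda_2$ handles, and the finiteness of the packing number.
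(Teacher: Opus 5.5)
Your proposal is correct and follows the paper's route. The paper derives the lemma directly from the separation bound \eqref{eq:CID-separation}: the output states of any ID code are pairwise more than $\zeta$ apart in trace distance, so they form a $\zeta$-packing of $I_n(\cN)$. It then applies the packing--covering comparison \eqref{eq:cover-pack-inequality} with $\eps=\zeta/2$. Your additional remarks on the distinctness of the $\omega_i$, on finiteness, and the ball-counting proof of $\mathscr P_{\zeta}\leq\mathscr C_{\zeta/2}$ only spell out what the paper leaves implicit.
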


Thus, upper bounds on covering numbers of the image $I_n(\cN)$ in trace distance yield converse bounds for identification. In Euclidean spaces, Sudakov inequality is an important tool to bound packing/covering numbers of a bounded set in terms of its Gaussian mean width, which quantifies how wide the set looks in random directions on average (see e.g. \cite{Ledoux1991prob-banach, Avidan2015analysis, Vershynin2018HDP} for detailed expositions).

\begin{definition}\label{def:gaussian-width}
The \emph{Gaussian mean width} of a bounded set $S\subseteq \mathbb R^m$ is defined as
\begin{equation}
    w_G(S):=\mathbb E \sup_{x\in S}\langle g,x\rangle,
\end{equation}
where $g\sim N(0,1_m)$ is a standard Gaussian random vector\footnote{i.e. $g$ is a $\R^m$-valued random variable whose coordinates (in any orthonormal basis) are independent and identically distributed (i.i.d) according to the standard Gaussian distribution $N(0,1)$.} in $\mathbb R^m$.
\end{definition}

\begin{lemma}[Sudakov inequality in $\R^m$]  \label{lemma:sudakov} \cite{Sudakov1971} \cite[Chapter 3]{Ledoux1991prob-banach} \cite[Chapter 7]{Vershynin2018HDP} \\
There exists a universal constant $C>0$ such that for every bounded set
$S\subseteq \mathbb R^m$ and $\eps>0$,
\begin{equation}
\log \mathscr{C}_{\eps}(S;d)\leq C\,\frac{w_G(S)^2}{\eps^2},
\end{equation} 
where $d(x,y):= \sqrt{\sum_i (x_i-y_i)^2}$ is the standard Euclidean distance in $\R^m$.
\end{lemma}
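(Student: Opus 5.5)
We will bound the covering number by a packing number and then control that packing number with a Gaussian comparison argument. By \eqref{eq:cover-pack-inequality}, $\mathscr C_\eps(S;d)\le\mathscr P_\eps(S;d)$. So it suffices to fix an $\eps$-packing $\{x_1,\dots,x_N\}\subseteq S$ with $N\ge 2$ and show $\log N\le C\,w_G(S)^2/\eps^2$. The case $N=1$ is trivial.

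\textbf{Step 1: compare two Gaussian processes.} Define the centered Gaussian process $X_i:=\langle g,x_i\rangle$ for $i\in[N]$, with $g\sim N(0,1_m)$. Its increments satisfy
\begin{equation}
\mathbb E(X_i-X_j)^2=\norm{x_i-x_j}_2^2>\eps^2 \qquad (i\neq j).
\end{equation}
Compare it with $Y_i:=\frac{\eps}{\sqrt2}h_i$, where $h_1,\dots,h_N$ are i.i.d.\ $N(0,1)$. These satisfy $\mathbb E(Y_i-Y_j)^2=\eps^2$. The Sudakov--Fernique comparison inequality then gives
\begin{equation}
\mathbb E\max_i Y_i\le \mathbb E\max_i X_i\le \mathbb E\sup_{x\in S}\langle g,x\rangle=w_G(S).
\end{equation}

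\textbf{Step 2: lower-bound the i.i.d.\ maximum.} We need the elementary estimate $\mathbb E\max_{i\le N}h_i\ge c\sqrt{\log N}$ for $N\ge2$ and a universal $c>0$. One way: use the Gaussian tail lower bound $\Pr(h>t)\ge c' e^{-t^2/2}/t$ with $t\approx\sqrt{\log N}$. This shows $\max_i h_i\ge t$ with probability bounded below. Combine this with the crude bound $\mathbb E\max_i h_i\ge \mathbb E\max(h_1,h_2)>0$, together with a bound on the negative part, to handle small $N$.

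Steps 1 and 2 together give $c\,\eps\sqrt{\log N}/\sqrt2\le w_G(S)$. Hence $\log N\le (2/c^2)\,w_G(S)^2/\eps^2$, which is the claim.

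\textbf{Main obstacle: proving Sudakov--Fernique.} The expected hard step is the comparison inequality itself, if it is not simply quoted. I would prove it by Gaussian interpolation. Take the smooth maximum
\begin{equation}
F_\beta(z)=\beta^{-1}\log\sum_i e^{\beta z_i},
\end{equation}
which satisfies $\max_i z_i\le F_\beta(z)\le\max_i z_i+\beta^{-1}\log N$. Consider $Z(t)=\sqrt t X+\sqrt{1-t}Y$ with $X$ and $Y$ independent. Gaussian integration by parts gives
\begin{equation}
\frac{d}{dt}\mathbb E F_\beta(Z(t))=\frac12\sum_{i,j}\bigl(\mathbb E X_iX_j-\mathbb E Y_iY_j\bigr)\,\mathbb E\,\partial_{ij}F_\beta.
\end{equation}
Using $\partial_{ij}F_\beta=\beta(p_i\delta_{ij}-p_ip_j)$ with $p$ the softmax weights, together with $\sum_j\partial_{ij}F_\beta=0$, this rewrites as
\begin{equation}
\frac\beta4\sum_{i\neq j}p_ip_j\bigl(\mathbb E(X_i-X_j)^2-\mathbb E(Y_i-Y_j)^2\bigr)\ge0.
\end{equation}
Integrating over $t\in[0,1]$ and letting $\beta\to\infty$ yields $\mathbb E\max_i X_i\ge\mathbb E\max_i Y_i$.
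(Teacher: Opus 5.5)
Your argument is correct, and it is the standard proof of Sudakov's minoration from the source the paper cites (Vershynin, Chapter 7): you pass from covering to an $\eps$-separated subset, compare $\langle g,x_i\rangle$ with $\frac{\eps}{\sqrt2}h_i$ via Sudakov--Fernique (which you prove, as there, by Gaussian interpolation of the soft-max), and conclude with $\mathbb E\max_{i\le N}h_i\ge c\sqrt{\log N}$. The paper does not prove the lemma; it only quotes it from these references, so your route coincides with the one it relies on.
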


Note that the trace distance is not Euclidean. However, it can
be compared with any Euclidean norm on the output space at an appropriate
domination cost. More precisely,
let $W$ be a \emph{Euclidean structure} on $\Bsa{B}$, i.e. a Hilbert--Schmidt self-adjoint and positive definite linear map $W:\Bsa{B}\to\Bsa{B}$. It
induces the following inner product and norm
\begin{equation}\label{eq:W-Euclidean}
    \langle Y,Z\rangle_W
    :=
    \Tr\bigl(YW(Z)\bigr),
    \qquad
    \norm{Y}_W
    :=
    \sqrt{\langle Y,Y\rangle_W},
\end{equation}
on $\Bsa{B}$. The corresponding \emph{trace norm domination cost} is defined as
\begin{equation}\label{eq:KW}
    K(W)
    :=
    \sup_{Y\neq0}
    \frac{\norm{Y}_1}{\norm{Y}_W}.
\end{equation}
Thus, $K(W)$ is the optimal constant such that $\norm{Y}_1
    \leq
    K(W)\norm{Y}_W$ holds for all $Y\in \Bsa{B}$. We note an alternative characterization of $K(W)$ below. 

\begin{lemma}\label{lemma:KW-dual}
For a Euclidean structure $W:\Bsa B\to\Bsa B$,
\begin{equation}
    K(W)^2
    =
    \sup_{\substack{H=H^\dagger\\ \norm{H}_\infty\leq1}}
    \Tr\!\left(H  W^{-1}(H)\right).
\end{equation}
\end{lemma}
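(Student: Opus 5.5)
This is a duality statement for norms in finite dimensions. We equip $\Bsa{B}$ with the Hilbert--Schmidt pairing $\langle Y,H\rangle=\Tr(YH)$ and compute the dual norm of $\norm{\cdot}_W$ with respect to this pairing. By definition, $K(W)$ is the operator norm of the identity map $(\Bsa{B},\norm{\cdot}_W)\to(\Bsa{B},\norm{\cdot}_1)$. Its square then equals the squared operator norm of the adjoint (identity) map between the dual spaces. The dual of $\norm{\cdot}_1$ on Hermitian operators is $\norm{\cdot}_\infty$, so the plan is to show that the dual of $\norm{\cdot}_W$ is $H\mapsto\sqrt{\Tr(HW^{-1}(H))}$.

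\textbf{Dual of $\norm{\cdot}_W$.} Since $W$ is positive definite and Hilbert--Schmidt self-adjoint, it has a positive definite self-adjoint square root $W^{1/2}$ on $\Bsa{B}$, and $W^{-1}$ is again a Euclidean structure. For Hermitian $H$ and $Y$, Cauchy--Schwarz gives
\begin{equation}
\Tr(HY)=\Tr\bigl(W^{-1/2}(H)\,W^{1/2}(Y)\bigr)\leq \norm{W^{-1/2}(H)}_2\,\norm{W^{1/2}(Y)}_2=\sqrt{\Tr(HW^{-1}(H))}\,\norm{Y}_W.
\end{equation}
Equality holds for $Y=W^{-1}(H)$. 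Hence
\begin{equation}
\sup_{Y\neq0}\frac{\Tr(HY)}{\norm{Y}_W}=\norm{H}_{W^{-1}}.
\end{equation}

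\textbf{Dual of the trace norm.} On Hermitian operators,
\begin{equation}
\norm{Y}_1=\sup\{\Tr(HY): H=H^\dagger,\ \norm{H}_\infty\leq1\}.
\end{equation}
The supremum is attained at $H=\mathrm{sgn}(Y)$, i.e. the difference of the projectors onto the positive and negative eigenspaces of $Y$, and the inequality $\Tr(HY)\leq\norm{H}_\infty\norm{Y}_1$ is Hölder's inequality.

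\textbf{Exchanging the suprema.} Combining the two dual formulas,
\begin{equation}
K(W)=\sup_{Y\neq0}\ \sup_{\norm{H}_\infty\leq1}\frac{\Tr(HY)}{\norm{Y}_W}=\sup_{\norm{H}_\infty\leq1}\ \sup_{Y\neq0}\frac{\Tr(HY)}{\norm{Y}_W}=\sup_{\norm{H}_\infty\leq1}\sqrt{\Tr(HW^{-1}(H))}.
\end{equation}
Squaring gives the claim. Interchanging two suprema is always valid, so this step needs no further justification.

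The only point requiring care is to work over the \emph{real} space $\Bsa{B}$. There the pairing $\Tr(HY)$ is real and symmetric, $W^{\pm1/2}$ are self-adjoint for it, and the dual of the trace norm is the operator norm restricted to Hermitian $H$. Once this setting is fixed, the argument has no real obstacle.
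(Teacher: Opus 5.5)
Your proof is correct and follows the paper's argument essentially step for step: trace-norm/operator-norm duality on Hermitian operators, exchange of the two suprema, and Cauchy--Schwarz in the Hilbert--Schmidt space via $W^{-1/2}(H)$ and $W^{1/2}(Y)$, with equality at $Y=W^{-1}(H)$. Your opening framing of $K(W)^2$ as the norm of the adjoint identity map between the dual spaces just restates the same computation, and the proof does not depend on it.
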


\begin{proof}
By duality of trace and operator norms \cite[Chapter 4]{Bhatia1997matrix}, we can write
\begin{align}
    K(W) = \sup_{Y\neq0}
    \frac{\norm{Y}_1}{\norm{Y}_W}
    &= 
    \sup_{Y\neq0}
    \sup_{\substack{H=H^\dagger\\ \norm{H}_\infty\leq1}}
    \frac{\Tr(HY)}{\norm{Y}_W} \\
    &=
    \sup_{\substack{H=H^\dagger\\ \norm{H}_\infty\leq1}}
    \sup_{Y\neq0}
    \frac{\Tr(HY)}{\sqrt{\Tr(YW(Y))}}.
\end{align}
Using Cauchy-Schwarz inequality in the Hilbert-Schmidt space $(\Bsa{B}, \langle \cdot , \cdot \rangle)$ shows
\begin{equation}
    \Tr(HY)
    =
    \Tr\!\left(
        W^{-1/2}(H)\,W^{1/2}(Y)
    \right) \leq \sqrt{\Tr\bigl(HW^{-1}(H)\bigr)\,
    \Tr\bigl(YW(Y)\bigr) },
\end{equation}
where equality is attained for $Y=W^{-1}(H)$. Consequently,
\begin{equation}
    \sup_{Y\neq0}
    \frac{\Tr(HY)}{\sqrt{\Tr(YW(Y))}}
    =
    \sqrt{\Tr\!\left(HW^{-1}(H)\right)},
\end{equation}
and we obtain the desired expression:
\begin{equation}
    K(W)
    =
    \sup_{\substack{H=H^\dagger\\ \norm{H}_\infty\leq1}}
    \sqrt{\Tr\!\left(H W^{-1}(H)\right)}.
\end{equation}
\end{proof}

For a bounded set $S\subseteq\Bsa{B}$, define its \emph{Gaussian mean width} with
respect to this $W$-Euclidean structure as
\begin{equation}
    w_{G,W}(S)
    :=
    \E\sup_{Y\in S}
    \langle G_W,Y\rangle_W,
\end{equation}
where $G_W$ is a standard Gaussian vector in
$\bigl(\Bsa{B},\langle\cdot,\cdot\rangle_W\bigr)$. Let $\cN:\B{A}\to\B{B}$ be a quantum channel. Its $W$-adjoint
$\cN^{*,W}:\Bsa{B}\to\Bsa{A}$ is defined by the relation
\begin{equation}\label{eq:N-Wadjoint}
    \langle Y,\cN(X)\rangle_W
    =
    \langle\cN^{*,W}(Y),X\rangle, \qquad X\in \Bsa{A}, Y\in \Bsa{B},
\end{equation}
where the inner product $\langle \cdot , \cdot \rangle$ on $\Bsa{A}$ is the Hilbert-Schmidt inner product \eqref{eq:HS}. The associated $W$-\emph{weighted singular operator} is defined as \cite[Section 3.1]{singh2026gaussianmeanwidthstrong}
\begin{equation}\label{eq:Q_N,W}
    Q_{\cN,W}
    :=
    \E\left[
        \left(\cN^{*,W}(G_W)\right)^2
    \right]
    \in\Bsa{A}.
\end{equation}
Below, we note some simple properties of the singular operator.

\begin{lemma}[Properties of the weighted singular operator]
\label{lemma:Q-general-W-properties}
Let $\cN:\B{A}\to\B{B}$ be a quantum channel and let
$W:\Bsa B\to\Bsa B$ be a Euclidean structure. 
\begin{itemize}
    \item For any
Hilbert--Schmidt orthonormal basis $\{H_a\}_a$ of $\Bsa B$,
\begin{align}
    Q_{\cN,W}
    &=
    \sum_a
    \left[
        \cN^*\bigl(W^{1/2}(H_a)\bigr)
    \right]^2
    \label{eq:Q-W-basis-expansion}\\
    &=
    \sum_{a,b}
    \left\langle H_a,W(H_b)\right\rangle
    \cN^*(H_a)\cN^*(H_b).
    \label{eq:Q-W-covariance-expansion}
\end{align}
In particular, $Q_{\cN,W}\geq0$ and the expression is
independent of the chosen basis.
\item The map $W\mapsto Q_{\cN, W}$ extends uniquely to a linear positivity-preserving map $W\mapsto \Xi_{\cN}(W)$ defined on all Hilbert-Schmidt
self-adjoint linear maps $W: \Bsa B \to \Bsa B$.
\end{itemize}
\end{lemma}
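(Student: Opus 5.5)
The proof is a direct computation from the definition \eqref{eq:Q_N,W}. We work in the $W$-Euclidean space $\bigl(\Bsa B,\langle\cdot,\cdot\rangle_W\bigr)$.

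\textbf{Step 1: an orthonormal basis for $\langle\cdot,\cdot\rangle_W$.} If $\{H_a\}_a$ is Hilbert--Schmidt orthonormal, then $E_a:=W^{-1/2}(H_a)$ is $W$-orthonormal, because
\[
\langle E_a,E_b\rangle_W=\Tr\bigl(W^{-1/2}(H_a)\,W^{1/2}(H_b)\bigr)=\langle H_a,H_b\rangle=\delta_{ab},
\]
where we used that $W^{\pm1/2}$ is Hilbert--Schmidt self-adjoint. Hence we may write $G_W=\sum_a g_a E_a$ with $g_a$ i.i.d. standard Gaussians.

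\textbf{Step 2: identify the $W$-adjoint.} From \eqref{eq:N-Wadjoint} we have
\[
\langle Y,\cN(X)\rangle_W=\Tr\bigl(W(Y)\,\cN(X)\bigr)=\langle\cN^*(W(Y)),X\rangle .
\]
Therefore $\cN^{*,W}=\cN^*\circ W$, and so $\cN^{*,W}(E_a)=\cN^*(W^{1/2}(H_a))$.

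\textbf{Step 3: expand the square.} We have
\[
\bigl(\cN^{*,W}(G_W)\bigr)^2=\sum_{a,b}g_ag_b\,\cN^*(W^{1/2}H_a)\,\cN^*(W^{1/2}H_b).
\]
Taking expectations and using $\E g_ag_b=\delta_{ab}$ gives \eqref{eq:Q-W-basis-expansion}. Each summand is the square of a Hermitian operator, since $\cN^*$ is Hermiticity-preserving; hence $Q_{\cN,W}\ge0$.

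For \eqref{eq:Q-W-covariance-expansion}, expand $W^{1/2}(H_a)=\sum_c\langle H_c,W^{1/2}H_a\rangle H_c$ and sum over $a$. This gives
\[
\sum_a\langle H_c,W^{1/2}H_a\rangle\langle H_d,W^{1/2}H_a\rangle=\langle H_c,W H_d\rangle,
\]
by the completeness of $\{H_a\}$ and the self-adjointness of $W^{1/2}$. By linearity of $\cN^*$ we obtain \eqref{eq:Q-W-covariance-expansion} (relabelling $c,d\to a,b$).

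\textbf{Step 4: basis independence.} The expression $\sum_{a,b}\langle H_a,W H_b\rangle\,X(H_a)Y(H_b)$ is the contraction of the bilinear form $(Z,Z')\mapsto \cN^*(Z)\cN^*(Z')$ with the operator $W$. In coordinates, an orthogonal change of basis $H_a'=\sum_c O_{ca}H_c$ transforms both factors, and the factors of $O$ cancel via $OO^{\mathsf T}=1$. Alternatively, one can note that the original definition as an expectation over $G_W$ is manifestly basis-free.

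\textbf{Step 5: linear extension and positivity.} Define, for any Hilbert--Schmidt self-adjoint $W$,
\[
\Xi_\cN(W):=\sum_{a,b}\langle H_a,W(H_b)\rangle\,\cN^*(H_a)\cN^*(H_b).
\]
This is manifestly linear in $W$, is basis-independent by the argument of Step 4, and agrees with $Q_{\cN,W}$ on positive definite $W$. Since the positive definite maps span the self-adjoint maps and the extension is forced by linearity, it is unique.

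For positivity preservation, let $W\ge0$ be merely positive semidefinite. Diagonalize $W=\sum_k \mu_k |V_k\rangle\!\rangle\langle\!\langle V_k|$ with $\mu_k\ge0$ and $V_k$ Hermitian and Hilbert--Schmidt orthonormal. Choosing the basis $H_a=V_a$ gives
\[
\Xi_\cN(W)=\sum_k\mu_k\,\cN^*(V_k)^2\ge0 .
\]
(Alternatively, obtain this as the limit of $Q_{\cN,W+\epsilon\,\id}$ as $\epsilon\to0$.)

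\textbf{Main obstacle.} The only delicate points are the bookkeeping of the $W^{\pm1/2}$ factors in identifying the $W$-adjoint, and making precise what ``positivity-preserving'' means here. We read it as: $W\ge0$ (as a map on the real Hilbert space $\Bsa B$) implies $\Xi_\cN(W)\ge0$ (as an operator). The spectral decomposition in Step 5 settles this cleanly.
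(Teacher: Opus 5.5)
Your proposal is correct and follows essentially the same route as the paper: you realize $G_W$ as $W^{-1/2}$ applied to a Hilbert--Schmidt standard Gaussian, use $\cN^{*,W}=\cN^*\circ W$, expand the square with $\E[g_ag_b]=\delta_{ab}$, and obtain positivity of the linear extension by diagonalizing $W\geq 0$ in a Hilbert--Schmidt orthonormal basis of Hermitian operators. Your completeness computation for the covariance form, the orthogonal change-of-basis check for basis independence, and the spanning argument for uniqueness fill in details the paper leaves implicit.
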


\begin{proof}
Let $G=\sum_a g_aH_a$ be any standard Gaussian vector in the
Hilbert-Schmidt space $(\Bsa B , \langle \cdot , \cdot \rangle )$, where the $g_a$ are
independent standard real Gaussian variables. Then, $G_W:=W^{-1/2}(G)$
is a standard Gaussian vector in
$(\Bsa B,\langle\cdot,\cdot\rangle_W)$. Hence,
\begin{align}
 Q_{\cN,W} =  \E \left[ (\cN^{*,W}(G_W))^2 \right]
    &=
   \E \left[ \left(\cN^*\bigl(W^{1/2}(G)\bigr) \right)^2 \right] \\
    &=
    \E \left[ \left( \sum_a g_a\,
    \cN^*\bigl(W^{1/2}(H_a)\bigr) \right)^2 \right] \\
    &= \sum_a
        \left( \cN^*\bigl(W^{1/2}(H_a)\bigr) \right)^2 \\
    &= \sum_{a,b}
    \left\langle H_a,W(H_b)\right\rangle
    \cN^*(H_a)\cN^*(H_b).
\end{align}
where we used $\cN^{*,W}=\cN^* \circ W$ and 
$\mathbb E[g_ag_b]=\delta_{ab}$.

The above expansion is linear in $W$, and hence defines the
claimed linear extension $\Xi_{\cN}$. To prove positivity-preserving property, let
$S\geq0$ and choose a spectral decomposition on the real
Hilbert--Schmidt space $\Bsa B$,
\begin{equation}
    S
    =
    \sum_k\lambda_k
    \ket{F_k}\!\bra{F_k},
    \qquad
    \lambda_k\geq0,
\end{equation}
where $\{F_k\}_k$ is Hilbert--Schmidt orthonormal and
$\ket{F_k}\!\bra{F_k}$ denotes the rank-one superoperator
$Y\mapsto\langle F_k,Y\rangle F_k$. Then
\begin{equation}
    \Xi_{\cN}(S)
    =
    \sum_k\lambda_k
    \left[\cN^*(F_k)\right]^2
    \geq0,
\end{equation}
because each $\cN^*(F_k)$ is Hermitian.
\end{proof}

For a quantum channel $\cN:\B{A}\to \B{B}$ and an $n$-fold Euclidean structure $W:\Bsa{B^{\otimes n}}\to \Bsa{B^{\otimes n}}$, it is shown in \cite{singh2026gaussianmeanwidthstrong} that  the $n$-shot $W$-Gaussian width of the image $I_n(\cN)$ is bounded from above as 
\begin{equation}
    w_{G,W}(I_n(\cN)) \leq \sqrt{2n\log d_A \norm{Q_{\cN^{\otimes n},W}}_{\infty} }. 
\end{equation}
Combining this with the domination bound $\norm{\cdot}_1 \leq K(W) \norm{\cdot}_{W}$ and Sudakov inequality (Lemma~\ref{lemma:sudakov}) gives a non-trivial method to control the covering numbers of the channel images $I_n(\cN)$ in trace distance, thus yielding a non-trivial method to obtain converse bounds on identification via Lemma~\ref{lemma:CID-separation}, see \cite{singh2026gaussianmeanwidthstrong} for more details. Below, we state the main result of \cite[Theorem 5.2]{singh2026gaussianmeanwidthstrong} in a one-shot form.

\begin{theorem}[Euclidean Gaussian mean width converse \cite{singh2026gaussianmeanwidthstrong}] 
\label{theorem:fully-optimized-one-shot-converse} \hspace{2pt} \\
Let $\cN:\B{A}\to\B{B}$ be a quantum channel, $n\in \N$ and $\lambda_1,\lambda_2>0$ such that $
\lambda_1+\lambda_2<1$. Then, for every $0<\zeta< 1-\lambda_1-\lambda_2$, the following holds true:
\begin{equation}\label{eq:fully-optimized-packing}
    \log
    N_{(n,\lambda_1,\lambda_2)}(\cN)
    \leq
    \frac{2nC\log d_A}{\zeta^2}\,
    \mu_*(\cN^{\otimes n}),
\end{equation}
where $C>0$ is a universal constant, and 
\begin{equation}\label{eq:mu-star}
    \mu_*(\cN)
    :=
    \inf_W
    \left[
        K(W)^2
        \norm{Q_{\cN,W}}_\infty
    \right],
\end{equation}
where $W$ ranges over all admissible Euclidean structures on $\Bsa{B}$.
\end{theorem}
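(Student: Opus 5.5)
The plan is to chain Lemma~\ref{lemma:CID-separation}, the trace-norm domination by the $W$-norm, Sudakov's inequality (Lemma~\ref{lemma:sudakov}), and the $W$-Gaussian width bound quoted just before the statement. Fix an admissible Euclidean structure $W$ on $\Bsa{B^{\otimes n}}$ and apply everything to the channel $\cN^{\otimes n}$.

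\textbf{Step 1: pass from packing in trace distance to covering in $W$-distance.} By Lemma~\ref{lemma:CID-separation},
\begin{equation}
N_{(n,\lambda_1,\lambda_2)}(\cN)\le \mathscr C_{\zeta/2}(I_n(\cN);d_{\Tr}).
\end{equation}
For any $Y,Z$ we have $d_{\Tr}(Y,Z)=\tfrac12\norm{Y-Z}_1\le \tfrac12 K(W)\norm{Y-Z}_W$. Hence any $(\zeta/K(W))$-covering of $I_n(\cN)$ in $\norm{\cdot}_W$ is a $(\zeta/2)$-covering in trace distance. This gives
\begin{equation}
\mathscr C_{\zeta/2}(I_n(\cN);d_{\Tr})\le \mathscr C_{\zeta/K(W)}(I_n(\cN);\norm{\cdot}_W).
\end{equation}

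\textbf{Step 2: apply Sudakov in the $W$-geometry.} The space $(\Bsa{B^{\otimes n}},\langle\cdot,\cdot\rangle_W)$ is isometric to $\R^m$ with $m=d_B^{2n}$. Under this isometry $G_W$ is a standard Gaussian vector and $w_{G,W}$ is exactly the Gaussian mean width of Definition~\ref{def:gaussian-width}. Lemma~\ref{lemma:sudakov} therefore yields
\begin{equation}
\log \mathscr C_{\zeta/K(W)}\le C K(W)^2 w_{G,W}(I_n(\cN))^2/\zeta^2.
\end{equation}

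\textbf{Step 3: bound the width.} The quoted estimate $w_{G,W}(I_n(\cN))\le\sqrt{2n\log d_A\norm{Q_{\cN^{\otimes n},W}}_\infty}$ gives
\begin{equation}
\log N_{(n,\lambda_1,\lambda_2)}(\cN)\le \frac{2nC\log d_A}{\zeta^2}K(W)^2\norm{Q_{\cN^{\otimes n},W}}_\infty.
\end{equation}
Since $W$ was arbitrary, taking the infimum over $W$ gives $\mu_*(\cN^{\otimes n})$ and proves \eqref{eq:fully-optimized-packing}.

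If one wanted this to be self-contained, the width bound itself would be the main obstacle. I would prove it by writing
\begin{equation}
\sup_{\rho}\langle G_W,\cN^{\otimes n}(\rho)\rangle_W=\lambda_{\max}\bigl((\cN^{\otimes n})^{*,W}(G_W)\bigr).
\end{equation}
This is the top eigenvalue of a Hermitian Gaussian series $X=\sum_a g_a M_a$ on $A^{\otimes n}$, whose variance operator is $\E X^2=Q_{\cN^{\otimes n},W}$. The noncommutative Khintchine / matrix Gaussian series bound $\E\lambda_{\max}(X)\le\sqrt{2\norm{\E X^2}_\infty\log(\text{dim})}$, with dimension $d_A^n$, then gives exactly $\sqrt{2n\log d_A\norm{Q}_\infty}$. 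The remaining steps are bookkeeping of constants, in particular absorbing the factor from $\zeta/2$ versus $\zeta/K(W)$ into $C$.
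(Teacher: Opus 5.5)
Your proposal is correct and follows the same route the paper outlines for this cited result: Lemma~\ref{lemma:CID-separation}, then trace-norm domination via $K(W)$, then Sudakov in the $(\Bsa{B^{\otimes n}},\langle\cdot,\cdot\rangle_W)$ geometry, then the width bound $w_{G,W}(I_n(\cN))\le\sqrt{2n\log d_A\norm{Q_{\cN^{\otimes n},W}}_\infty}$, and finally an infimum over $W$. Your sketch of that width bound via $\lambda_{\max}\bigl((\cN^{\otimes n})^{*,W}(G_W)\bigr)$ and the matrix Gaussian-series estimate is also sound. One small correction: no constant needs absorbing, because the $\tfrac12$ in $d_{\Tr}$ cancels the $\zeta/2$ exactly, so you get the stated factor $2nC\log d_A/\zeta^2$ on the nose.
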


\section{Main results}\label{sec:main}

\subsection{Universal entanglement-assisted capacity converse}
\label{sec:universal-CE-converse}

We now combine the Euclidean converse from Theorem~\ref{theorem:fully-optimized-one-shot-converse} with the channel smooth max-information and its
asymptotic equipartition property
\cite{Fang2020channel-max-information} to obtain a universal strong converse
bound $C_{\ID}\leq C_E$ in terms of the entanglement-assisted classical capacity.

For a quantum channel $\cN:\B{A}\to\B{B}$, define the \emph{unnormalized Choi operator}
\begin{equation}\label{eq:choi-def}
    J_{\cN}
    :=
    \sum_{a,b\in[d_A]}
    \ket{a}\!\bra{b}\otimes
    \cN\bigl(\ket{a}\!\bra{b}\bigr)
    \in\Bsa{A\otimes B},
\end{equation}
so that
$\Tr_B J_{\cN}=\iden_A$. Define the \emph{channel max-information} as \cite{Fang2020channel-max-information}:
\begin{equation}\label{eq:channel-Imax}
    I_{\max}(\cN)
    :=
    \log\inf\left\{
        \Tr V_B:
        V_B\geq0,
        J_{\cN}\leq \iden_A\otimes V_B
    \right\},
\end{equation}
and, for $\eps\geq0$, define its diamond-norm smoothed version by
\begin{equation}\label{eq:smooth-channel-Imax}
    I_{\max}^{\eps}(\cN)
    :=
    \inf_{\substack{
        \widetilde{\cN}:\B{A}\to\B{B}\;\CPTP\\[1mm]
        \frac12\norm{\widetilde{\cN}-\cN}_{\diamond}\leq\eps
    }}
    I_{\max}(\widetilde{\cN}).
\end{equation}
Recall that the \emph{diamond norm} of a linear map $\cN: \B{A}\to \B{B}$ is defined as
\begin{equation}
    \norm{\cN}_{\diamond} := \sup_{\norm{X}_1\leq 1} \norm{ (\id_R \otimes \cN_{A\to B})(X_{RA})}_1,
\end{equation}
where the supremum is over all $X\in \B{R\otimes A}$ and $d_R\in \mathbb{N}$ with $\norm{X}_1\leq 1$\cite[Chapter 3]{watrous2018theory}.

The channel max-information can be viewed as a one-shot version of the channel mutual information \eqref{eq:C_E=I}, and it
characterizes the one-shot cost of simulating a quantum channel using a noiseless quantum channel and no-signalling assistance \cite[Theorem 6]{Fang2020channel-max-information}. Crucially,
the smoothed max-information satisfies the following asymptotic equipartition property.

\begin{theorem}\cite[Theorem~8]{Fang2020channel-max-information}
    For every quantum channel $\cN:\B{A}\to \B{B}$,
    \begin{equation}\label{eq:channel-Imax-AEP}
            \lim_{\eps\to0}
    \lim_{n\to\infty}
    \frac1n
    I_{\max}^{\eps}(\cN^{\otimes n})
    =
    C_E(\cN).
    \end{equation}
\end{theorem}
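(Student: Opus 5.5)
The identity splits into two inequalities, and I would prove them separately. The lower bound $\ge C_E(\cN)$ is an operational meta-converse. The upper bound $\le C_E(\cN)$ is an explicit smoothing construction that reduces to an i.i.d.\ AEP for states.

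\emph{Step 1: dual form of $I_{\max}$.} First I would rewrite \eqref{eq:channel-Imax} through SDP duality. The dual of $\inf\{\Tr V_B: J_\cN\le \iden_A\otimes V_B\}$ optimizes over states $\rho_A$. It should give
\begin{equation}
I_{\max}(\cN)=\max_{\rho_A}\;\min_{\sigma_B} D_{\max}\bigl(\omega^{\rho}_{RB}\,\big\|\,\rho_R\otimes\sigma_B\bigr),
\end{equation}
where $\omega^\rho_{RB}=(\id_R\otimes\cN)(\psi^\rho_{RA})$ and $\psi^\rho$ purifies $\rho$. This is a channel max-mutual information.

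\emph{Step 2: the lower bound $\lim\ge C_E$.} Take an entanglement-assisted code for $\cN^{\otimes n}$ with $M$ messages, maximal error $\lambda$, and rate below $C_E(\cN)$, where $\lambda$ can be made small. Let $\widetilde{\cN}$ be any CPTP map with $\frac12\norm{\widetilde\cN-\cN^{\otimes n}}_\diamond\le\eps$. The same code used on $\widetilde\cN$ has error at most $\lambda+\eps$, since shared entanglement is covered by the diamond norm. A standard meta-converse then applies: compare the joint state of message register and channel output with the product of the message distribution and an arbitrary output state $\sigma$. On that product the decoder succeeds with probability $1/M$. The $D_{\max}$ bound $J_{\widetilde\cN}\le\iden\otimes V$ transfers to the encoded state with entanglement and yields
\begin{equation}
\log M\le I_{\max}(\widetilde\cN)+\log\frac{1}{1-\lambda-\eps}.
\end{equation}
Taking the infimum over $\widetilde\cN$, dividing by $n$ and letting $n\to\infty$ gives a lower limit of at least any achievable EA rate, hence at least $C_E(\cN)$.

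\emph{Step 3: the upper bound $\lim\le C_E$.} Here I must construct, for each $n$, a channel $\widetilde\cN_n$ that is $\eps$-close to $\cN^{\otimes n}$ in diamond norm and has $I_{\max}(\widetilde\cN_n)\le nC_E(\cN)+o(n)$.

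1. By permutation covariance of $\cN^{\otimes n}$, I would look for a permutation-covariant $\widetilde\cN_n$. By the dual form in Step 1, it then suffices to control permutation-invariant inputs.
2. Using the post-selection technique (de Finetti reduction), the diamond distance and the max-information of covariant maps are both controlled by their action on the single universal purified state $\int\psi_\rho^{\otimes n}\,d\rho$. This costs only a $\mathrm{poly}(n)$ factor, i.e.\ $O(\log n)$ in $I_{\max}$ and a $\mathrm{poly}(n)$ blow-up in distance.
3. For each i.i.d.\ input $\psi_\rho^{\otimes n}$, the smooth max-mutual-information AEP for states gives roughly $nI(R:B)_{\omega^\rho}+O(\sqrt n)\le nC_E(\cN)+O(\sqrt n)$.
4. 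A net over $\rho$ of polynomial size, together with a universal output state $\sigma_{B^n}$ (for example the universal symmetric state, which costs $O(\log n)$ in $D_{\max}$), lets the smoothing be chosen uniformly in $\rho$.

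The main obstacle is turning this state-wise smoothing into one channel-level smoothing. I would build $\widetilde\cN_n$ by composing $\cN^{\otimes n}$ with an output projection onto type-class/typical subspaces relative to $\sigma$, followed by a replacement channel on failure. The error that the post-selection step amplifies by $\mathrm{poly}(n)$ must be beaten by the exponentially small typicality error, so I would pick the typicality parameters to give $e^{-\Omega(n^{c})}$ error for some $c>0$. Letting $n\to\infty$ and then $\eps\to0$ completes the argument.
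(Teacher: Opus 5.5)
\textbf{Steps 1 and 2 are fine.} In Step 1 the maximum over $\rho_A$ is trivial, and no SDP duality is needed: conjugating by $\rho^{-1/2}$ shows that $\min_\sigma D_{\max}(\omega^\rho_{RB}\|\rho_R\otimes\sigma_B)=\log\min\{\Tr V: J_\cN\le\iden\otimes V\}$ for every full-rank $\rho$. Step 2 is a correct meta-converse. If $J_{\widetilde\cN}\le\iden\otimes V$, then $(\id\otimes\widetilde\cN)(\tau_{RA})\le\tau_R\otimes V$ for every input, so the success probability is at most $\Tr V/M$. This even gives $\liminf_n\frac1n I^{\eps}_{\max}(\cN^{\otimes n})\ge C_E(\cN)$ for each fixed $\eps<1$.

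\textbf{The gap is Step 3, the hard direction.} The paper does not prove this statement; it imports it from \cite{Fang2020channel-max-information}, where, as far as I recall, this direction comes from the quantum reverse Shannon theorem. Your concrete construction ($\cN^{\otimes n}$, then an output projection onto typical subspaces of $\sigma$, then a replacement channel) cannot work. Take the binary symmetric channel $W$ with flip probability $p<1/2$, embedded as dephase-then-flip. Then $C_E=1-h(p)$, but $I_{\max}(\cN^{\otimes n})=\log\sum_{y^n}\max_{x^n}W^n(y^n|x^n)=n\log(2(1-p))$, which is larger by $np\log\frac{1-p}{p}$. The optimal $\sigma$ is uniform, so your projection acts as the identity. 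The excess comes from the pair $x^n=y^n$. That pair is jointly atypical (too few flips), yet it exists for every output $y^n$, so removing it needs a truncation conditioned on the input. More generally, averaging over bit-flip covariance shows the following: every output-only post-processing $\mathcal R$ with $\frac12\norm{\mathcal R\circ\cN^{\otimes n}-\cN^{\otimes n}}_\diamond\le\eps$, for $\eps$ below a fixed constant, still has $I_{\max}\ge n\log(2(1-p))-O(\sqrt n)$. For quantum inputs you cannot condition by measuring the input. The per-$\rho$ state smoothings of your items 3--4 also do not assemble into a single CPTP map. Note too, from your own Step 1, that $I_{\max}(\widetilde\cN_n)$ does not depend on the input at all. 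It is the diamond-norm constraint, not a choice of $\rho$, that pushes the value up to $C_E$.

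\textbf{How to close the gap.} The missing ingredient is a channel-simulation theorem; with it the argument is short. By the quantum reverse Shannon theorem (Bennett--Devetak--Harrow--Shor--Winter; Berta--Christandl--Renner, via post-selection and embezzling states), for every $\delta>0$ and all large $n$ there is a protocol simulating $\cN^{\otimes n}$ to diamond distance $\eps$. It uses a shared state $\Psi_{A_0B_0}$, an instrument $\{\mathcal E_m\}_{m\in[M]}$ for Alice on $A^nA_0$, decoding channels $\mathcal D_m$ for Bob, and $M\le 2^{n(C_E(\cN)+\delta)}$ classical messages. Fix an input $\tau_{RA^n}$ and set $\xi^m_{RB_0}:=(\mathcal E_m\otimes\id)(\tau_{RA^n}\otimes\Psi_{A_0B_0})$. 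Then $\xi^m_{RB_0}\le\sum_{m'}\xi^{m'}_{RB_0}=\tau_R\otimes\Psi_{B_0}$. Hence the simulated channel satisfies $(\id\otimes\widetilde\cN_n)(\tau)\le\tau_R\otimes V$ with $V:=\sum_m\mathcal D_m(\Psi_{B_0})$ and $\Tr V=M$, which is the mirror image of your Step 2 bound. So $I^{\eps}_{\max}(\cN^{\otimes n})\le n(C_E(\cN)+\delta)$. This matches what I take to be the cited route: $I^\eps_{\max}$ is the no-signalling-assisted simulation cost, and entanglement-assisted protocols are a special case. Finally, state both directions with $\liminf_n$ and $\limsup_n$, since the existence of $\lim_n$ at fixed $\eps$ is not given a priori.
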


The next lemma is the primary technical ingredient of our work, where we relate the optimized Euclidean quantity $\mu_*$ from Theorem~\ref{theorem:fully-optimized-one-shot-converse} to the
channel max-information $I_{\max}$ via a specifically chosen Euclidean structure \eqref{eq:W-bures}.

\begin{lemma}[Max-information converse]
\label{lemma:mu-star-Imax} \hspace{2pt} \\ For every quantum channel $\cN:\B{A}\to\B{B}$, the following holds true:
\begin{equation}\label{eq:mu-star-Imax}
    \mu_*(\cN)
    \leq
    2^{I_{\max}(\cN)+1}.
\end{equation}
\end{lemma}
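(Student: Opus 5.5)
The plan is to pick one Euclidean structure $W$ in the infimum \eqref{eq:mu-star}, built from an optimal operator $V_B$ in \eqref{eq:channel-Imax}, and bound $K(W)^2$ and $\norm{Q_{\cN,W}}_\infty$ separately.

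\textbf{Choice of $W$ and the domination cost.} Let $V_B\geq 0$ satisfy $J_{\cN}\leq \iden_A\otimes V_B$, and set $c:=\Tr V_B$. Put $\sigma:=V_B/c$. If $\sigma$ is not full rank, replace $V_B$ by $V_B+\delta\iden$ and let $\delta\to0$ at the end; the factor $2$ in \eqref{eq:mu-star-Imax} also leaves room for a perturbation such as $\sigma\mapsto\frac12(\sigma+\iden/d_B)$. Take the Bures-type structure
\begin{equation}
    W^{-1}(H):=\sigma^{1/2}H\sigma^{1/2}, \qquad W(Y)=\sigma^{-1/2}Y\sigma^{-1/2},
\end{equation}
which is Hilbert--Schmidt self-adjoint and positive definite. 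By Lemma~\ref{lemma:KW-dual},
\begin{equation}
    K(W)^2=\sup_{\norm{H}_\infty\leq1}\Tr\bigl(H\sigma^{1/2}H\sigma^{1/2}\bigr).
\end{equation}
By Cauchy--Schwarz, $\Tr(H\sigma^{1/2}H\sigma^{1/2})\leq\Tr(H^2\sigma)\leq\Tr\sigma=1$, so $K(W)^2\leq 1$.

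\textbf{Expressing the singular operator through the Choi operator.} Use \eqref{eq:Q-W-covariance-expansion} and note that $\sum_a H_a\otimes H_a=\sum_{k,l}\ket{k}\!\bra{l}\otimes\ket{l}\!\bra{k}$ for any Hilbert--Schmidt orthonormal basis of $\Bsa B$. This turns the double sum into a contraction of two copies of the Choi operator, using $\cN^*(Y)=\Tr_B[J_\cN(\iden\otimes Y^{\mathsf T})]^{\mathsf T}$ up to transposition conventions. I expect the outcome, possibly with a transpose on $A$ that does not affect the operator norm, to be
\begin{equation}
    Q_{\cN,W}=\Tr_B\bigl[J_\cN(\iden_A\otimes\sigma^{-1/2})J_\cN(\iden_A\otimes\sigma^{-1/2})\bigr].
\end{equation}
Set $M:=(\iden\otimes\sigma^{-1/4})J_\cN(\iden\otimes\sigma^{-1/4})\geq0$. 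Cyclicity of the partial trace over $B$ for operators acting only on $B$ then gives $Q_{\cN,W}=\Tr_B[M^2]$. The constraint becomes $M\leq c\,(\iden_A\otimes\sigma^{1/2})$. Also $\Tr_B[(\iden\otimes\sigma^{1/4})M(\iden\otimes\sigma^{1/4})]=\Tr_BJ_\cN=\iden_A$.

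\textbf{Main obstacle: bounding $\norm{\Tr_B M^2}_\infty$ by $2c$.} Operator squaring is not monotone, so $M\leq c(\iden\otimes\sigma^{1/2})$ cannot simply be squared. My first attempt is to write $M^2=M^{1/2}MM^{1/2}\leq c\,M^{1/2}(\iden\otimes\sigma^{1/2})M^{1/2}$. For a unit vector $\psi\in A$, I would then bound
\begin{equation}
    \bra{\psi}\Tr_B(M^2)\ket{\psi}=\Tr[M(\psi\otimes\iden)M]
\end{equation}
by splitting $\Tr[M(\psi\otimes\iden)M]=\Tr[M^{1/2}(\psi\otimes\iden)M^{1/2}\cdot M]$. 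Next, insert $M\leq c(\iden\otimes\sigma^{1/2})$ on one side. On the other side, use the normalization $\Tr_B[(\iden\otimes\sigma^{1/4})M(\iden\otimes\sigma^{1/4})]=\iden_A$ through a Cauchy--Schwarz / AM--GM step of the form $XY+YX\leq X^2+Y^2$. That is where I expect the factor $2$ to appear.

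If this route fails, the fallback is the Stinespring form $\cN^*(Y)=V^\dagger(Y\otimes\iden_E)V$ together with the Kadison--Schwarz inequality $\cN^*(X)\cN^*(X)^\dagger\leq\cN^*(XX^\dagger)$. Here one would choose the factorization $W=W^{1/2}W^{1/2}$ asymmetrically, splitting the $\sigma$-weights unevenly, so that the resulting sum collapses to $\cN^*$ applied to a multiple of $\sigma^{-1}$-type operators controlled by $J_\cN\leq c\,\iden\otimes\sigma$.

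\textbf{Conclusion.} Once $\norm{Q_{\cN,W}}_\infty\leq 2c$ is established, combining it with $K(W)^2\leq1$ and taking the infimum over admissible $V_B$ yields $\mu_*(\cN)\leq 2\,\Tr V_B$, i.e. $\mu_*(\cN)\leq 2^{I_{\max}(\cN)+1}$.
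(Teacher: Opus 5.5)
There is a genuine gap, and it lies in the choice of Euclidean structure. The map $W(Y)=\sigma^{-1/2}Y\sigma^{-1/2}$ is not the Bures structure. It is the state-weighted structure $W_\sigma$ from the paper's comparison section, i.e. the monotone metric with $f(t)=\sqrt t$.

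In the eigenbasis $\sigma=\sum_i s_i\ket{i}\!\bra{i}$, write $J_\cN=\sum_{i,j}X_{ij}\otimes\ket{i}\!\bra{j}$. Your formula then reads $Q_{\cN,W}^{\mathsf T}=\sum_{i,j}X_{ij}X_{ij}^\dagger/\sqrt{s_is_j}$. Feasibility only controls $\sum_{i,j}X_{ij}X_{ij}^\dagger/s_j=\Tr_B[J_\cN(\iden_A\otimes\sigma^{-1})J_\cN]\leq c\,\iden_A$. This follows from $J_\cN(\iden\otimes\sigma^{-1})J_\cN\leq cJ_\cN$, i.e. $R^2\leq R$ for the positive contraction $R=D^{-1/2}J_\cN D^{-1/2}$ with $D=c\,\iden\otimes\sigma$. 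The weight $1/\sqrt{s_is_j}$ is not dominated by a constant times $1/s_j$ when $s_i\ll s_j$.

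In fact, the inequality $\norm{\Tr_B M^2}_\infty\leq 2c$ that you flag as the main obstacle is false. Since $Q_{\cN,W}$ depends only on $W$, no rearrangement can prove it: not your AM--GM step, not Kadison--Schwarz, not an asymmetric factorization of $W$.

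Here is a counterexample. On $\C^{k+1}$ take the Kraus operators $E_i=k^{-1/2}\ket{i}\!\bra{0}+\ket{0}\!\bra{i}$, $i=1,\ldots,k$. Then $J_\cN=\sum_i\ket{v_i}\!\bra{v_i}$ with $\ket{v_i}=k^{-1/2}\ket{0}\ket{i}+\ket{i}\ket{0}$. The operator $V_B=2\ket{0}\!\bra{0}+\frac2k\sum_{i\geq1}\ket{i}\!\bra{i}$ is feasible with $\Tr V_B=4$, so $I_{\max}(\cN)\leq 2$. For $\sigma=V_B/4$ one finds
\begin{equation*}
Q_{\cN,W}^{\mathsf T}=\bigl(2+2\sqrt k\bigr)\ket{0}\!\bra{0}+\Bigl(2+\tfrac{2}{\sqrt k}\Bigr)\bigl(\iden-\ket{0}\!\bra{0}\bigr),
\end{equation*}
whose norm exceeds $2\Tr V_B=8$ once $k\geq10$.

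No other $\sigma$ rescues this structure either. For every full-rank $\sigma$, setting $p=\bra{0}\sigma\ket{0}$, one has
\begin{equation*}
\bra{0}Q^{\mathsf T}_{\cN,W_\sigma}\ket{0}\geq\tfrac1k\bra{0}\sigma^{-1/2}\ket{0}\sum_{i\geq1}\bra{i}\sigma^{-1/2}\ket{i}\geq\sqrt{k}/\sqrt{p(1-p)}\geq2\sqrt k.
\end{equation*}
So $\inf_\sigma\norm{Q_{\cN,W_\sigma}}_\infty\geq 2\sqrt k$, while $2^{I_{\max}(\cN)+1}\leq 8$.

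The fix is to use the genuine Bures structure $W=2(L_\sigma+R_\sigma)^{-1}$. It still has $K(W)^2=\sup_{\norm{H}_\infty\leq1}\Tr(\sigma H^2)=1$, but its weights are $2/(s_i+s_j)\leq 2/s_j$. Hence
\begin{equation*}
Q^{\mathsf T}_{\cN,W}=2\sum_{i,j}\frac{X_{ij}X_{ij}^\dagger}{s_i+s_j}\leq 2\Tr_B\bigl[J_\cN(\iden\otimes\sigma^{-1})J_\cN\bigr]\leq 2c\,\iden_A.
\end{equation*}
The factor $2$ comes from this weight comparison, not from an AM--GM step. This is exactly the paper's argument. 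The paper also observes that Bures minimizes $Q_{\cN,W_{\sigma,f}}$ among all monotone-metric structures, which is why the $f(t)=\sqrt t$ choice is the wrong one here. In the example above, Bures at $\sigma=V_B/4$ gives norm at most $6$.
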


\begin{proof}
Fix an arbitrary feasible operator $V_B\geq0$ in the optimization \eqref{eq:channel-Imax} defining $I_{\max}$, i.e.
$J_{\cN}\leq\iden_A\otimes V_B$. For $\delta>0$, the operator
$V_B+\delta\iden_B$ is full rank and remains feasible, while
$\Tr(V_B+\delta\iden_B)\to\Tr V_B$ as $\delta\downarrow0$. It therefore
suffices to prove
\begin{equation}\label{eq:mu-star-c}
    \mu_*(\cN)\leq2\Tr V_B
\end{equation}
for every full-rank $V_B\geq 0$ satisfying $J_{\cN}\leq\iden_A\otimes V_B$. 

Fix such a $V_B$, and set $c:=\Tr V_B$ and
$\sigma:=V_B/c\in\cD_+(B)$. Then
$J_{\cN}\leq D:=\iden_A\otimes V_B
=c\,\iden_A\otimes\sigma_B$. Consequently,
$R:=D^{-1/2}J_{\cN}D^{-1/2}$ is a positive contraction. Hence
$R^2\leq R$, and conjugating this inequality by $D^{1/2}$ gives
$J_{\cN}D^{-1}J_{\cN}\leq J_{\cN}$. Thus, the feasability condition
\begin{align}
 J_{\cN}\leq c \iden_A \otimes \sigma_B &\implies    J_{\cN}
    \bigl(\iden_A\otimes\sigma_B^{-1}\bigr)
    J_{\cN}
    \leq
    c\,J_{\cN} \\ 
    &\implies \Tr_B[J_{\cN}
    \bigl(\iden_A\otimes\sigma_B^{-1}\bigr)
    J_{\cN}] \leq c \iden_A. \label{eq:Jsigma J<=J}
\end{align}
  
Now, consider the Euclidean structure $W_{\sigma, \mathsf{B}}:\Bsa{B}\to \Bsa{B}$:
\begin{equation}\label{eq:W-bures}
    W_{\sigma, \mathsf{B}}
    :=
    2(L_\sigma+R_\sigma)^{-1},
    \qquad
    L_\sigma(Y):=\sigma Y,
    \quad
    R_\sigma(Y):=Y\sigma.
\end{equation}
Since $\sigma$ is full rank, $W_{\sigma,\mathsf{B}}$ is a Hilbert--Schmidt self-adjoint,
positive definite linear map on $\Bsa B$ and is therefore admissible in
the optimization defining $\mu_*$ (see \eqref{eq:mu-star}). It thus suffices to show that the Euclidean objective \eqref{eq:mu-star} for this specific $W_{\sigma, \mathsf{B}}$ satisfies
\begin{equation}
    K(W_{\sigma, \mathsf{B}})^2\norm{Q_{\cN,W_{\sigma, \mathsf{B}}}}_\infty
    \leq
    2 \Tr V_B.
\end{equation}

To this end, we first estimate the trace norm domination cost \eqref{eq:KW}. Using Lemma~\ref{lemma:KW-dual},
\begin{align}
    K(W_{\sigma, \mathsf{B}})^2
    &=
    \sup_{\substack{H=H^\dagger\\\norm{H}_\infty\leq1}}
    \Tr\!\left[H W_{\sigma, \mathsf{B}}^{-1}(H)\right] 
    =
    \sup_{\norm{H}_\infty\leq1}\Tr(\sigma H^2)
    \leq1,
\end{align}
with equality attained at $H=\iden_B$. Hence, $K(W_{\sigma, \mathsf{B}})=1$.

It remains to bound the corresponding singular operator \eqref{eq:Q_N,W}. We first diagonalize $\sigma=\sum_i s_i\ketbra{i}$, where $s_i>0$. Since $W_{\sigma, \mathsf{B}}(\ketbra{i}{j})
    = 2\ketbra{i}{j}/(s_i+s_j)$,
applying the expansion \eqref{eq:Q-W-basis-expansion} from Lemma~\ref{lemma:Q-general-W-properties} to the Hilbert-Schmidt
orthonormal basis
\begin{equation}
    \ketbra{i}{i},
    \qquad
    \frac{\ketbra{i}{j}+\ketbra{j}{i}}{\sqrt2},
    \qquad
    \frac{\mathrm{i}(\ketbra{i}{j}-\ketbra{j}{i})}{\sqrt2},
    \qquad i<j,
\end{equation}
gives
\begin{equation}\label{eq:Q-sym-matrix-units}
    Q_{\cN,W_{\sigma, \mathsf{B}}}
    =
    2\sum_{i,j}
    \frac{
        \cN^*(\ketbra{i}{j})\cN^*(\ketbra{j}{i})
    }{s_i+s_j}.
\end{equation}
Write the Choi operator \eqref{eq:choi-def} in blocks with respect to the eigenbasis
$\{\ket{i}\}_i$ of $\sigma$:
\begin{equation}\label{eq:choi-output-blocks}
    J_{\cN}
     =
    \sum_{i,j}X_{ij}\otimes \ketbra{i}{j},
    \qquad
    X_{ij}\in\B{A}.
\end{equation}
Since $J_{\cN}$ is self-adjoint, its blocks satisfy
$X_{ji}=X_{ij}^{\dagger}$. Moreover, $\cN^*(\ketbra{i}{j})=X_{ji}^{\mathsf T},$ where the transpose is taken in the input basis \eqref{eq:choi-def} used to define
$J_{\cN}$. Indeed, the matrix elements of both sides satisfy
$\bra{b}\cN^*(\ketbra{i}{j})\ket{a}
=\bra{j}\cN(\ket{a}\!\bra{b})\ket{i}$.
Substituting this in
\eqref{eq:Q-sym-matrix-units} gives
\begin{equation}\label{eq:Q-sym-choi-blocks}
    Q_{\cN,W_{\sigma, \mathsf{B}}}^{\mathsf T}
    =
    2\sum_{i,j}
    \frac{X_{ij}X_{ji}}{s_i+s_j}
    \leq
    2\sum_{i,j}
    \frac{X_{ij}X_{ji}}{s_j}
    \leq
    2c\,\iden_A,
\end{equation}
where we used $X_{ij}X_{ji}=X_{ij}X_{ij}^{\dagger}\geq 0$, $1/(s_i+s_j)\leq 1/s_j$ and the final inequality is precisely \eqref{eq:Jsigma J<=J} when expanded in the basis \eqref{eq:choi-output-blocks}. Since $K(W_{\sigma, \mathsf{B}})= 1$, we get
\begin{equation}
    \mu_*(\cN)
    \leq
    K(W_{\sigma, \mathsf{B}})^2\norm{Q_{\cN,W_{\sigma, \mathsf{B}}}}_\infty = \norm{Q^{\mathsf T}_{\cN,W_{\sigma, \mathsf{B}}}}_\infty
    \leq 2c =
    2 \Tr V_B.
\end{equation}
\end{proof}

Note that the Euclidean structure \eqref{eq:W-bures} used above is exactly the \emph{Bures} inner product, which is well known in the theory of quantum $\chi^2$-divergences, quantum Fisher information, and monotone Riemannian metrics
\cite{BraunsteinCaves1994,Petz1996,LesniewskiRuskai1999,
PetzGhinea2011,HiaiPetz2012}. There is a precise sense in which the Bures geometry is
optimal in the proof of Lemma~\ref{lemma:mu-star-Imax}. Recall that a (normalized) \emph{monotone metric} on a matrix space is a family
of complex inner products \cite{Petz1996, PetzGhinea2011}
\begin{equation}
    \gamma_\sigma:
    \B{B}\times\B{B}\to\C,
    \qquad
    \sigma\in\cD_+(B),
\end{equation}
which depends continuously on $\sigma$, is real on pairs of
self-adjoint operators, agrees with the classical Fisher metric on
commuting self-adjoint operators,
\begin{equation}
    [X,\sigma]=0,
    \quad
    X=X^\dagger
    \quad\Longrightarrow\quad
    \gamma_\sigma(X,X)
    =
    \Tr(\sigma^{-1}X^2),
\end{equation}
and contracts under quantum channels:
\begin{equation}
    \gamma_{\Phi(\sigma)}
    \bigl(\Phi(X),\Phi(X)\bigr)
    \leq
    \gamma_\sigma(X,X).
\end{equation}
Such monotone metrics are in one-to-one correspondence with operator-monotone functions $f: (0,\infty)\to (0,\infty)$ satisfying $f(1)=1$ and $ 
    f(t)=t f(t^{-1})$ \cite{Petz1996} \cite[Theorem 2.1]{PetzGhinea2011}:
\begin{equation}\label{eq:riemann-metric}
    \gamma_{\sigma,f}(X,Y)
    =
    \Tr\!\left[
        X^{\dagger}\mathsf J_{\sigma,f}^{-1}(Y)
    \right],
    \qquad
    \mathsf J_{\sigma,f}
    =
    R_\sigma^{1/2}
    f(L_\sigma R_\sigma^{-1})
    R_\sigma^{1/2},
\end{equation}
where $L_{\sigma}$ and $R_{\sigma}$ are the commuting positive left and right multiplication operators on the complex space $\B{B}$ defined in \eqref{eq:W-bures}, so that
$f(L_\sigma R_\sigma^{-1})$ is defined by the usual spectral functional
calculus. The symmetry condition $f(t)=t f(t^{-1})$ ensures that
$\mathsf J_{\sigma,f}$ and $W_{\sigma,f}
    :=
    \mathsf J_{\sigma,f}^{-1}$ preserve adjoints. Thus, the
restriction of $W_{\sigma,f}$ to $\Bsa B$ defines a
Euclidean structure in the sense of \eqref{eq:W-Euclidean}. The Bures metric corresponds to the arithmetic-mean function
\begin{equation}
    f_{\mathrm B}(t):=\frac{1+t}{2},
    \qquad
    \mathsf J_{\sigma,\mathrm B}
    =
    \frac{L_\sigma+R_\sigma}{2},
\end{equation}
and $W_{\sigma, B}:= \mathsf J_{\sigma, B}^{-1}$ coincides with the Euclidean structure \eqref{eq:W-bures} used in the proof of Lemma~\ref{lemma:mu-star-Imax}. Since every operator-monotone function $f$ as above satisfies
$f(t)\leq f_{\mathrm B}(t)$ \cite[Section~1.2]{PetzGhinea2011}, 
\begin{equation}
    \mathsf J_{\sigma,f}
    \leq
    \mathsf J_{\sigma,\mathrm B},
    \qquad
    W_{\sigma,\mathrm B}
    \leq
    W_{\sigma,f}.
\end{equation}
Moreover, every such normalized monotone metric has the same trace-norm
domination cost. Indeed, by Lemma~\ref{lemma:KW-dual}, we can express
\begin{align}
    K(W_{\sigma,f})^2
    &=
    \sup_{\substack{H=H^\dagger\\\norm{H}_\infty\leq1}}
    \Tr\!\left[H\mathsf J_{\sigma,f}(H)\right] \\
    &\leq
    \sup_{\norm{H}_\infty\leq1}
    \Tr\!\left[H\mathsf J_{\sigma,\mathrm B}(H)\right]
    =
    \sup_{\norm{H}_\infty\leq1}\Tr(\sigma H^2)
    \leq1,
\end{align}
with equality attained at $H=\iden_B$. Since the covariance map
$W\mapsto Q_{\cN,W}$ is positivity-preserving according to Lemma~\ref{lemma:Q-general-W-properties}, it follows that $Q_{\cN,W_{\sigma,\mathrm B}}
    \leq
    Q_{\cN,W_{\sigma,f}}.$ Consequently,
\begin{equation}
    K(W_{\sigma,\mathrm B})^2
    \norm{Q_{\cN,W_{\sigma,\mathrm B}}}_\infty
    \leq
    K(W_{\sigma,f})^2
    \norm{Q_{\cN,W_{\sigma,f}}}_\infty.
\end{equation}
Thus, for every fixed channel $\cN$ and reference state $\sigma\in\cD_+(B)$, the Euclidean structure induced by the Bures metric minimizes the converse objective in \eqref{eq:mu-star} among all Euclidean structures $W_{\sigma,f}=\mathsf J_{\sigma,f}^{-1}$ arising from monotone metrics via \eqref{eq:riemann-metric}. Note that the unrestricted optimization \eqref{eq:mu-star}, however, allows arbitrary Euclidean structures that need not arise in this way and may, in principle, yield a smaller objective value.

We are now ready to establish the universal $C_{\ID} \leq C_E$ converse bound. The additional technical step on top of Lemma~\ref{lemma:mu-star-Imax} is appropriate smoothing followed by the application of the asymptotic equipartition property \eqref{eq:channel-Imax-AEP}. 

\begin{theorem}[Entanglement-assisted capacity converse on identification]
\label{theorem:CID-leq-CE} \hspace{2pt} \\
Let $\cN:\B{A}\to\B{B}$ be a quantum channel. Then, for every
$\lambda_1,\lambda_2>0$ with $\lambda_1+\lambda_2<1$,
\begin{equation}\label{eq:CID-leq-CE-strong-converse}
    C_{\ID}(\cN)\leq  \limsup_{n\to\infty}
    \frac1n
    \log\log N_{(n,\lambda_1,\lambda_2)}(\cN)
    \leq
    C_E(\cN).
\end{equation}
\end{theorem}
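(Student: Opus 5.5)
The first inequality is immediate: $C_{\ID}$ is an infimum over $(\lambda_1,\lambda_2)$ of a $\liminf$, and for the fixed pair $(\lambda_1,\lambda_2)$ in the statement that $\liminf$ is at most the $\limsup$. The substance is the second inequality. The plan is to feed Lemma~\ref{lemma:mu-star-Imax} into Theorem~\ref{theorem:fully-optimized-one-shot-converse}, but applied to a smoothed channel, and then invoke the AEP \eqref{eq:channel-Imax-AEP}. Applying the unsmoothed bound to $\cN^{\otimes n}$ directly would only give $\frac1n\log\log N\lesssim \frac1n I_{\max}(\cN^{\otimes n})= I_{\max}(\cN)$, which is not $C_E$. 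So smoothing is essential.

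\textbf{Step 1 (smoothing transfers ID codes).} Fix $\eps>0$ with $2\eps<1-\lambda_1-\lambda_2$. Let $\widetilde{\cN}_n:\B{A^{\otimes n}}\to\B{B^{\otimes n}}$ be a CPTP map with $\frac12\norm{\widetilde{\cN}_n-\cN^{\otimes n}}_\diamond\le\eps$ that achieves $I^{\eps}_{\max}(\cN^{\otimes n})$ up to an additive $1$. Take any $(n,N,\lambda_1,\lambda_2)$ ID code $\{(\rho_i,D_i)\}$ for $\cN$. For every input $\rho$ and every effect $0\le D\le\iden$,
\begin{equation}
\abs{\Tr[(\widetilde{\cN}_n-\cN^{\otimes n})(\rho)D]}\le\eps .
\end{equation}
Hence the same pairs form a one-shot ID code for $\widetilde{\cN}_n$ with errors $\lambda_1+\eps$ and $\lambda_2+\eps$. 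The output packing argument (Lemma~\ref{lemma:CID-separation}) therefore still applies, with separation $1-\lambda_1-\lambda_2-2\eps>0$.

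\textbf{Step 2 (one-shot converse for a non-product channel).} Theorem~\ref{theorem:fully-optimized-one-shot-converse} is phrased for $\cN^{\otimes n}$, but its proof uses only the image of a single channel with input dimension $d_A^n$. The ingredients are the width bound $\sqrt{2\log(d_A^n)\norm{Q}_\infty}$, the domination cost $K(W)$, and Sudakov. I will therefore apply it to the general channel $\widetilde{\cN}_n$, viewed as a one-shot channel $A^n\to B^n$. This is the point that needs care: I must check that the cited theorem holds for arbitrary channels on $A^{\otimes n}$, not only tensor powers. I expect it does, since only the input dimension enters. With $\zeta$ chosen strictly below $1-\lambda_1-\lambda_2-2\eps$, this gives
\begin{equation}
\log N_{(n,\lambda_1,\lambda_2)}(\cN)\le \frac{2Cn\log d_A}{\zeta^2}\,\mu_*(\widetilde{\cN}_n)\le \frac{2Cn\log d_A}{\zeta^2}\,2^{I_{\max}(\widetilde{\cN}_n)+1},
\end{equation}
where the second inequality is Lemma~\ref{lemma:mu-star-Imax}. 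Then $I_{\max}(\widetilde{\cN}_n)\le I^{\eps}_{\max}(\cN^{\otimes n})+1$.

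\textbf{Step 3 (asymptotics).} Taking $\log$ once more and dividing by $n$ gives
\begin{equation}
\frac1n\log\log N\le \frac1n I^{\eps}_{\max}(\cN^{\otimes n})+O\!\left(\frac{\log n+\log(1/\zeta)}{n}\right).
\end{equation}
Take $\limsup_{n\to\infty}$, using the inner limit in \eqref{eq:channel-Imax-AEP}. This step may require that the limit exists for fixed $\eps$, or at least a $\limsup$ version, which is what the AEP supplies. The result is $\limsup_n\frac1n\log\log N\le \lim_n\frac1n I^{\eps}_{\max}(\cN^{\otimes n})$ for every sufficiently small $\eps$. Since the left side does not depend on $\eps$, letting $\eps\to0$ yields $C_E(\cN)$. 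This is exactly the strong converse, because $\lambda_1,\lambda_2$ are arbitrary with $\lambda_1+\lambda_2<1$.

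The main obstacle is the interplay of smoothing and error parameters. The smoothing $\eps$ must remain below $(1-\lambda_1-\lambda_2)/2$, and the AEP must be available in the order $\lim_{\eps\to0}\lim_n$. Since $I^\eps_{\max}$ is monotone decreasing in $\eps$, small $\eps$ is the unfavorable direction, so the bound for any fixed small $\eps$ is already at least the $\eps\to0$ limit. The limit is therefore taken in the right direction, with $\limsup_n\frac1n\log\log N\le\inf_{\eps}\lim_n\frac1n I^\eps_{\max}=C_E$.
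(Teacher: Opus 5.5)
Your proposal is correct and follows the paper's proof essentially step for step. Like the paper, you smooth $\cN^{\otimes n}$ to a near-optimal $\widetilde{\cN}_n$ for $I^{\eps}_{\max}$, note that the code becomes a $(1,N,\lambda_1+\eps,\lambda_2+\eps)$ code for $\widetilde{\cN}_n$, and apply Theorem~\ref{theorem:fully-optimized-one-shot-converse} at blocklength one, followed by Lemma~\ref{lemma:mu-star-Imax} and the AEP. Your Step 2 worry is unnecessary: that theorem is stated for an arbitrary channel, so taking the channel $\widetilde{\cN}_n$ on input $A^{\otimes n}$ at blocklength one directly gives the factor $\log(d_A^n)=n\log d_A$.

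The one slip is in your closing remark. Since $\eps\mapsto\lim_n\frac1n I^{\eps}_{\max}(\cN^{\otimes n})$ is decreasing, the fixed-$\eps$ bound is at most (not at least) the $\eps\to0$ limit $C_E(\cN)$. Likewise, $\inf_\eps$ gives $\le C_E(\cN)$ rather than $=C_E(\cN)$. This is still exactly the inequality you need.
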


\begin{proof}
Set $\Delta:=1-\lambda_1-\lambda_2>0,$
and fix $0<\eps<\Delta/4$. For every $n\in\N$, choose a
channel $\widetilde{\cN}_n:
    \B{A^{\otimes n}}
    \to
    \B{B^{\otimes n}}$
such that
\begin{equation}\label{eq:smooth-channel-choice}
    \frac12
    \norm{\widetilde{\cN}_n-\cN^{\otimes n}}_\diamond
    \leq
    \eps,
    \qquad
    I_{\max}(\widetilde{\cN}_n)
    \leq
    I_{\max}^{\eps}(\cN^{\otimes n})+1.
\end{equation}
Now, consider a $(n,N,\lambda_1,\lambda_2)$ identification code
$(\rho_i,D_i)_{i\in[N]}$ for $\cN$, and define
\begin{equation}
    \omega_i:=\cN^{\otimes n}(\rho_i),
    \qquad
    \widetilde{\omega}_i:=\widetilde{\cN}_n(\rho_i).
\end{equation}
The diamond-norm bound in
\eqref{eq:smooth-channel-choice} implies $d_{\Tr}(\omega_i,\widetilde{\omega}_i)
    \leq
    \eps$ for every $i$. Consequently, for every effect
$0\leq D\leq\iden_{B^{\otimes n}}$,
   $ \left|
        \Tr\bigl((\widetilde{\omega}_i-\omega_i)D\bigr)
    \right|
    \leq
    \eps.$ It follows that
\begin{align}
    \Tr(\widetilde{\omega}_iD_i)
    &\geq
    \Tr(\omega_iD_i)-\eps
    \geq
    1-\lambda_1-\eps, \label{eq:smoothed-ID-first-kind}\\
    \Tr(\widetilde{\omega}_iD_j)
    &\leq
    \Tr(\omega_iD_j)+\eps
    \leq
    \lambda_2+\eps,
    \qquad i\neq j. \label{eq:smoothed-ID-second-kind}
\end{align}
Thus, the same family $(\rho_i,D_i)_{i\in[N]}$ is a
$(1,N,\lambda_1+\eps,\lambda_2+\eps)$
identification code for $\widetilde{\cN}_n$. Equivalently, the states
$\{\widetilde{\omega}_i\}_{i\in[N]}$ form a
$\zeta$-packing of the image (see Lemma~\ref{lemma:CID-separation})
\begin{equation}
    \widetilde I_n
    :=
    \widetilde{\cN}_n\bigl(\cD(A^{\otimes n})\bigr)
\end{equation}
in trace distance for every $0<\zeta<
    1-(\lambda_1+\eps)-(\lambda_2+\eps)
    =
    \Delta-2\eps.$ Fix $\zeta:=(\Delta-2\eps)/2>0.$
Applying 
Theorem~\ref{theorem:fully-optimized-one-shot-converse} to $\widetilde{\cN}_n$ at blocklength one shows that
\begin{align}
    \log N
    \leq
    \log
    N_{(1,\lambda_1+\eps,\lambda_2+\eps)}
    (\widetilde{\cN}_n)
    \leq
    \frac{2Cn\log d_A}{\zeta^2}\,
    \mu_*(\widetilde{\cN}_n).
    \label{eq:smoothed-one-shot-mu-star}
\end{align}
Since the above estimate holds for every
$(n,N,\lambda_1,\lambda_2)$ identification code for $\cN$, we obtain
\begin{equation}
    \log N_{(n,\lambda_1,\lambda_2)}(\cN)
    \leq
    \frac{2Cn\log d_A}{\zeta^2}\,
    \mu_*(\widetilde{\cN}_n) \leq \frac{8Cn\log d_A}{\zeta^2}\,
    2^{I_{\max}^{\eps}(\cN^{\otimes n})}.
\end{equation}
where the final bound follows from Lemma~\ref{lemma:mu-star-Imax} and \eqref{eq:smooth-channel-choice}.
Thus,
\begin{align}
    \frac1n
    \log\log N_{(n,\lambda_1,\lambda_2)}(\cN)
    \leq
    \frac1n
    I_{\max}^{\eps}(\cN^{\otimes n})
    +
    \frac1n
    \log\left(
        \frac{8Cn\log d_A}{\zeta^2}
    \right),
\end{align}
where the second term vanishes as
$n\to\infty$. Therefore,
\begin{equation}\label{eq:ID-rate-via-smooth-Imax}
    \limsup_{n\to\infty}
    \frac1n
    \log\log N_{(n,\lambda_1,\lambda_2)}(\cN)
    \leq
    \limsup_{n\to\infty}
    \frac1n
    I_{\max}^{\eps}(\cN^{\otimes n}).
\end{equation}
The estimate holds for every $0<\eps<\Delta/4$. Letting
$\eps\to0$ and applying the channel max-information asymptotic
equipartition property
\eqref{eq:channel-Imax-AEP}, we obtain
\begin{equation}
    \limsup_{n\to\infty}
    \frac1n
    \log\log N_{(n,\lambda_1,\lambda_2)}(\cN)
    \leq
    C_E(\cN).
\end{equation}
\end{proof}

The following is a simple corollary of \eqref{eq:CID>=QID>=C_E} and Theorem~\ref{theorem:CID-leq-CE}.

\begin{corollary}\label{cor:low-noise}
Let $\cN:\B{A}\to \B{B}$ be a low noise channel (see Definition~\ref{def:low-noise}). Then,
\begin{equation}
    C_{\ID}(\cN)=Q_{\ID, v}(\cN)
    =Q_{\ID}(\cN)
    =C_E(\cN).
\end{equation}
\end{corollary}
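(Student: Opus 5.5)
The plan is to sandwich the identification capacity between two bounds that are already available and coincide for low-noise channels. For the lower bound I use the chain \eqref{eq:CID>=QID>=C_E}, which holds precisely under the low-noise hypothesis of Definition~\ref{def:low-noise}:
\begin{equation*}
C_{\ID}(\cN)\geq Q_{\ID,v}(\cN)\geq Q_{\ID}(\cN)=C_E(\cN).
\end{equation*}
Its ingredients are all taken as given from Section~\ref{sec:CID-QID-CE}. The first two inequalities come from fingerprinting concatenation and from the fact that blind codes are in particular visible codes, as in \eqref{eq:CID>=QID}. The equality $Q_{\ID}=C_E$ follows by plugging the maximizer $\phi_\star$, which has $I(R\rangle B)_{\omega_\star}>0$, into the Hayden--Winter formula \eqref{eq:QID=I(R:B)}.

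That substitution needs a little care, because \eqref{eq:QID=I(R:B)} is regularized. The single-letter quantity gives $Q^{(1)}_{\ID}(\cN)\geq I(R:B)_{\omega_\star}=C_E(\cN)$. For $\cN^{\otimes n}$ the product input $\phi_\star^{\otimes n}$ has coherent information $nI(R\rangle B)_{\omega_\star}>0$ and mutual information $nC_E(\cN)$. So $\frac1n Q^{(1)}_{\ID}(\cN^{\otimes n})\geq C_E(\cN)$ for every $n$, and hence $Q_{\ID}(\cN)\geq C_E(\cN)$. For the reverse inequality, $Q^{(1)}_{\ID}(\cN^{\otimes n})\leq \max_\phi I(R:B)=C_E(\cN^{\otimes n})=nC_E(\cN)$ by additivity of $C_E$. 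This gives equality $Q_{\ID}=C_E$, as the paper asserts.

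For the upper bound I invoke Theorem~\ref{theorem:CID-leq-CE}, which gives $C_{\ID}(\cN)\leq C_E(\cN)$ for every channel. Combining the two bounds, every inequality in the chain
\begin{equation*}
C_E(\cN)\geq C_{\ID}(\cN)\geq Q_{\ID,v}(\cN)\geq Q_{\ID}(\cN)=C_E(\cN)
\end{equation*}
must be an equality, which is the claim. No step here is a real obstacle: all the substantive content sits in Theorem~\ref{theorem:CID-leq-CE} and the Hayden--Winter theorem.
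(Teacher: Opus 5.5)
Your proposal is correct and matches the paper's argument: the corollary follows by combining the low-noise achievability chain \eqref{eq:CID>=QID>=C_E} with the universal converse of Theorem~\ref{theorem:CID-leq-CE}, which forces every inequality in the sandwich to be an equality. You also check the regularization in \eqref{eq:QID=I(R:B)} explicitly via the product input $\phi_\star^{\otimes n}$; the paper leaves this implicit and only cites additivity of $C_E$.
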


\subsection{Strictness and superadditivity}

We now show that the converse bound from Theorem~\ref{theorem:CID-leq-CE} can be strict $C_{\ID}<C_E$, thus proving that $C_{\ID}\geq C_E$ \emph{cannot} universally hold. We do this by analyzing the depolarizing $\cD_p:\B{\C^d}\to \B{\C^d}$ and transpose-depolarizing $\cD_q^\top : \B{\C^d}\to \B{\C^d}$ channels:
\begin{align}
    \cD_p(X) &:= (1-p)X + p \Tr(X) \frac{\iden}{d}, \label{eq:depol}  \\
    \cD^\top_q(X) &:= (1-q)X^\top + q \Tr(X) \frac{\iden}{d},  \label{eq:t-depol}
\end{align}
in the respective parameter ranges where the maps are completely positive and trace-preserving: $0\leq p\leq \frac{d^2}{d^2-1}$ and $\frac{d}{d+1}\leq q\leq \frac{d}{d-1}$.

\begin{proposition}\label{prop:main}
Let $\cD_q^\top : \B{\C^d}\to \B{\C^d}$ be the transpose-depolarizing channel. For $q=\frac{d}{d+1}$ and $d\ge 2$, the following holds true:
    \begin{align}\label{eq:CID<=hd+2logd/d+1}
        C_{\ID}\left(\cD_{\frac{d}{d+1}}^{\mathsf T}\right) &\leq \log d-\frac{d-1}{d}\log(d+1) \\
        & < 1 + \log \frac{d}{d+1} = C_E \left(\cD_{\frac{d}{d+1}}^\top \right).
    \end{align}
\end{proposition}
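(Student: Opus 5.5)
Three facts give the proposition. First, the identification capacity is invariant under composing with transposition: $C_{\ID}(\cD^{\mathsf T}_q)=C_{\ID}(\cD_q)$. Second, the converse $C_{\ID}\le C_E$ of Theorem~\ref{theorem:CID-leq-CE}, applied to $\cD_q$ rather than to $\cD_q^{\mathsf T}$. Third, explicit evaluations of $C_E$ for both channels at $q=\frac{d}{d+1}$, followed by an elementary inequality.

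For the first step, note the pointwise identity $\cD^{\mathsf T}_q=\mathsf T\circ\cD_q$, since $(1-q)X^{\mathsf T}+q\Tr(X)\iden/d=\bigl((1-q)X+q\Tr(X)\iden/d\bigr)^{\mathsf T}$. Hence $(\cD_q^{\mathsf T})^{\otimes n}=\mathsf T^{\otimes n}\circ\cD_q^{\otimes n}$, where $\mathsf T^{\otimes n}$ is the full transpose on $(\C^d)^{\otimes n}$.

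Given any $(n,N,\lambda_1,\lambda_2)$ ID code $\{(\rho_i,D_i)\}$ for $\cD_q$, the family $\{(\rho_i,D_i^{\mathsf T})\}$ is an ID code with the same parameters for $\cD_q^{\mathsf T}$. Indeed $0\le D_i^{\mathsf T}\le\iden$, and
\[
\Tr\bigl(\omega^{\mathsf T}D^{\mathsf T}\bigr)=\Tr\bigl((D\omega)^{\mathsf T}\bigr)=\Tr(\omega D).
\]
The converse direction works the same way, so $N_{(n,\lambda_1,\lambda_2)}$ coincides for the two channels and $C_{\ID}(\cD_q^{\mathsf T})=C_{\ID}(\cD_q)$. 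Theorem~\ref{theorem:CID-leq-CE} then gives $C_{\ID}(\cD^{\mathsf T}_q)\le C_E(\cD_q)$. Note that $\cD_q$ with $q=\frac{d}{d+1}$ is a genuine channel, because $q\le d^2/(d^2-1)$.

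Next I compute both $C_E$ values from \eqref{eq:C_E=I}. Both channels are unitarily covariant: $\cD_q$ under $U$ with output $U$, and $\cD_q^{\mathsf T}$ under $U$ with output $\bar U$. By concavity of $I(R:B)$ in the input state, the maximum is attained at the maximally entangled input $\Phi$. Then $S(R)=S(B)=\log d$, so
\[
C_E=2\log d-S\bigl((\id\otimes\cN)(\Phi)\bigr).
\]

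For $\cD_q$, the Choi state is $(1-q)\Phi+q\,\iden/d^2$. Its eigenvalues are $1-q+q/d^2$ once and $q/d^2$ with multiplicity $d^2-1$. At $q=\frac{d}{d+1}$ these become $\frac{2}{d+1}$ (using $1-q=\frac1{d+1}$ and $q/d^2=\frac1{d(d+1)}$) and $\frac{1}{d(d+1)}$. The entropy is
\[
S=-\tfrac{2}{d+1}\log\tfrac{2}{d+1}+\tfrac{d-1}{d}\log\bigl(d(d+1)\bigr).
\]
Subtracting from $2\log d$ should, after simplification, give the value stated in the paper, $\log d-\frac{d-1}{d}\log(d+1)$. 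This algebra needs checking: either an extra term $\frac{2}{d+1}\log 2$-type contribution appears, or the paper's expression already absorbs it. I would verify carefully, and if there is a discrepancy, use the value actually computed and adjust only the final inequality.

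For $\cD^{\mathsf T}_q$, the Choi state is $(1-q)F/d+q\,\iden/d^2$, with $F$ the swap operator. Its eigenvalues are $\frac{1-q}{d}+\frac{q}{d^2}$ on the symmetric subspace, of dimension $\frac{d(d+1)}{2}$, and $-\frac{1-q}{d}+\frac{q}{d^2}$ on the antisymmetric subspace. At $q=\frac{d}{d+1}$ the antisymmetric eigenvalue vanishes. The state is then maximally mixed on the symmetric subspace, so $S=\log\frac{d(d+1)}{2}$ and
\[
C_E=2\log d-\log\tfrac{d(d+1)}{2}=1+\log\tfrac{d}{d+1},
\]
matching the statement.

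The final step is the strict inequality $\log d-\frac{d-1}{d}\log(d+1)<1+\log\frac{d}{d+1}$. Rearranged, this is equivalent to $\frac1d\log(d+1)<1$, i.e.\ $d+1<2^d$, which holds for all $d\ge2$.

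I expect the only real obstacle to be the two $C_E$ computations: justifying that the maximally entangled input is optimal via covariance and concavity, and reconciling the $\cD_q$ entropy algebra with the stated closed form. The rest is bookkeeping.
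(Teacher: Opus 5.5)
Your route is the same as the paper's. You use transpose invariance of the ID code sizes; your transposed-decoder argument is equivalent to the paper's observation that $I_n(\cD_q^{\mathsf T})=I_n(\cD_q)$. You then apply Theorem~\ref{theorem:CID-leq-CE} to $\cD_q$, use covariance to reduce $C_E$ to the maximally entangled input, and finish with an elementary inequality. Your evaluation of $C_E(\cD^{\mathsf T}_{d/(d+1)})=1+\log\frac{d}{d+1}$ and your reduction of the strict inequality to $d+1<2^d$ are both correct.

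The gap is in your evaluation of $C_E(\cD_{d/(d+1)})$. You leave the stated closed form unverified and suspect a discrepancy, but there is none: the discrepancy comes from an arithmetic slip. At $q=\frac{d}{d+1}$, the eigenvalue of $(1-q)\Phi+q\,\iden/d^2$ on $\Phi$ is $1-q+\frac{q}{d^2}=\frac{1}{d+1}+\frac{1}{d(d+1)}=\frac1d$, not $\frac{2}{d+1}$. You appear to have used $q/d$ in place of $q/d^2$. Your claimed spectrum is not even normalized: $\frac{2}{d+1}+\frac{d-1}{d}=1+\frac{d-1}{d(d+1)}>1$, for example $\frac76$ at $d=2$.

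With the correct spectrum (eigenvalue $\frac1d$ once and $\frac{1}{d(d+1)}$ with multiplicity $d^2-1$), you get $S=\frac1d\log d+\frac{d-1}{d}\log\bigl(d(d+1)\bigr)=\log d+\frac{d-1}{d}\log(d+1)$. Hence $C_E=2\log d-S=\log d-\frac{d-1}{d}\log(d+1)$, exactly as stated. Your fallback of using ``the value actually computed'' and adjusting the final inequality would have produced a wrong bound, since that value comes from a non-normalized spectrum. As written, the first inequality of the proposition is therefore not established. Once you correct this eigenvalue, your argument is complete.
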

\begin{proof}
Since the conditions \eqref{eq:lambda_1},\eqref{eq:lambda_2} for the existence of an $n$-shot ID code for any channel $\cN$ only depend on the image $I_n(\cN)$ in \eqref{eq:N-image}, it follows that 
\begin{equation}
    C_{\ID}\left(\cD_{\frac{d}{d+1}}^{\mathsf T}\right) = C_{\ID}\left(\cD_{\frac{d}{d+1}}\right) \leq C_E \left(\cD_{\frac{d}{d+1}}\right).
\end{equation}
Indeed, $I_n\left(\cD_{\frac{d}{d+1}}^{\mathsf T}\right)=I_n\left(\cD_{\frac{d}{d+1}}\right)$ for all $n$, since the two channels are related via transposition:
\begin{equation}
    (\cD_p^{\mathsf T})^{\otimes n} = {\mathsf T}^{\otimes n} \circ \cD_p^{\otimes n},
\end{equation}
which maps $\cD((\C^d)^{\otimes n})$ bijectively onto itself. The upper bound in terms of $C_E$ follows from Theorem~\ref{theorem:CID-leq-CE}. It remains to compute $C_E$. Using unitary covariance 
\begin{align}
\cD_p(UXU^\dagger)=
   U\,\cD_p(X) U^\dagger, \qquad
    \cD_q^{\mathsf T}(UXU^\dagger)=
    \overline U\,\cD_q^{\mathsf T}(X)\,\overline U^\dagger,
\end{align}
it is easy to show that the optimizing state in the $C_E$ formula \eqref{eq:C_E=I} is the maximally entangled state. More precisely, let
\begin{align}
\ket{\Phi}_{RA} &=\frac1{\sqrt d}\sum_{i\in [d]} \ket{i}_R\ket{i}_A, \\
    \sigma_{RB} &:= (\id_R\otimes \cD_{\frac{d}{d+1}})(\Phi_{RA}) =
    \frac1d\Phi_{RB}
    +
    \frac{1}{d(d+1)}
    \bigl(\iden_{RB}-\Phi_{RB}\bigr). \\
    \omega_{RB}
    &:=
    (\id_R\otimes \cD_{\frac{d}{d+1}}^{\mathsf T})(\Phi_{RA})
    =
    \frac{2P_{\rm sym}}{d(d+1)},
\end{align}
where $P_{\rm sym}$ is the projector onto the symmetric subspace in $R\otimes B$, where $R \cong A\cong B \cong \C^d$. A straightforward computation then reveals
\begin{align}
    C_E(\cD_{\frac{d}{d+1}})=
    I(R;B)_\sigma &= \log d-\frac{d-1}{d}\log(d+1)  \\
    &< \log \frac{2d}{d+1} =
    I(R;B)_\omega = C_E(\cD_{\frac{d}{d+1}}^{\mathsf T}).
\end{align}
\end{proof}

The non-additivity of $C_{\ID}$ follows as a simple consequence of Proposition~\ref{prop:main}.

\begin{corollary}[Strict superadditivity of \(C_{\ID}\)]
\label{cor:CID-nonadditive}
For integers $d\geq 2$ and \(m>(d+1)/2\),
\begin{equation}
    C_{\ID}(\cD_{\frac{d}{d+1}}^\top\otimes \id_m)
    >
    C_{\ID}(\cD_{\frac{d}{d+1}}^\top)+C_{\ID}(\id_m).
\end{equation}
\end{corollary}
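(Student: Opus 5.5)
Write $\cN:=\cD_{\frac{d}{d+1}}^\top$. The plan is to compute the left-hand side from below with the Hayden--Winter achievability \eqref{eq:CID>=QID>=C_E}, and to bound each term on the right-hand side from above with Theorem~\ref{theorem:CID-leq-CE} and Proposition~\ref{prop:main}.

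\textbf{Step 1 (right-hand side).} For the noiseless channel, Theorem~\ref{theorem:CID-leq-CE} gives $C_{\ID}(\id_m)\le C_E(\id_m)=2\log m$. Proposition~\ref{prop:main} gives $C_{\ID}(\cN)\le \log d-\frac{d-1}{d}\log(d+1)=C_E(\cD_{\frac{d}{d+1}})$. The right-hand side is therefore at most $2\log m+C_E(\cD_{\frac{d}{d+1}})$, which is strictly less than $2\log m+C_E(\cN)$.

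\textbf{Step 2 (left-hand side).} I will show that $\cN\otimes\id_m$ is low-noise in the sense of Definition~\ref{def:low-noise}. Then \eqref{eq:CID>=QID>=C_E} and additivity of $C_E$ give
\[
C_{\ID}(\cN\otimes\id_m)\ge C_E(\cN\otimes\id_m)=C_E(\cN)+2\log m,
\]
which combined with Step 1 yields strict superadditivity.

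For the input, take $\phi=\Phi_{R_1A}\otimes\Phi_{R_2A'}$, the product of maximally entangled states, with $R=R_1R_2$.
\begin{itemize}
\item By the computation in Proposition~\ref{prop:main}, $\Phi_{R_1A}$ maximizes $I(R_1:B)$ for $\cN$, and $\Phi_{R_2A'}$ maximizes it for $\id_m$. Since $C_E$ is additive and mutual information is additive on product states, $\phi$ achieves $C_E(\cN\otimes\id_m)$, so it is a maximizer.
\item The coherent information is additive on products: $I(R\rangle BB')=I(R_1\rangle B)_\omega+\log m$.
\item With $\omega_{R_1B}=2P_{\rm sym}/(d(d+1))$, we have $S(B)=\log d$ and $S(R_1B)=\log\frac{d(d+1)}{2}$. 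Hence $I(R_1\rangle B)_\omega=\log\frac{2}{d+1}$.
\end{itemize}
Positivity of the total coherent information therefore requires $\log m>\log\frac{d+1}{2}$, i.e.\ $m>(d+1)/2$. This is exactly the hypothesis.

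\textbf{Main obstacle.} The delicate point is Definition~\ref{def:low-noise}: it only asks for \emph{some} maximizer with positive coherent information. So it suffices to verify that the product of maximally entangled states is a maximizer. This holds because $C_E$ is additive, which the paper invokes, and because both factors achieve their respective $C_E$. If one worried about non-uniqueness of the maximizer, the definition's existential quantifier means this causes no issue.
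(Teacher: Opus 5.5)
Your proof is correct and follows the paper's argument. The product of maximally entangled inputs is a $C_E$-maximizer for $\cD_{\frac{d}{d+1}}^{\mathsf T}\otimes\id_m$ (by additivity of $C_E$) and has coherent information $\log\frac{2m}{d+1}>0$. So \eqref{eq:CID>=QID>=C_E} gives $C_{\ID}\geq C_E(\cD_{\frac{d}{d+1}}^{\mathsf T})+2\log m$, which strictly exceeds the right-hand side by Proposition~\ref{prop:main}. The only cosmetic difference is that you bound $C_{\ID}(\id_m)\leq 2\log m$ via Theorem~\ref{theorem:CID-leq-CE} rather than quoting the known value from \eqref{eq:C(id)>Csim(id)}; this suffices because only the upper bound is needed.
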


\begin{proof}
Let $A\cong R\cong B \cong \C^d$ and $A'\cong R'\cong B' \cong \C^m$. The product maximally entangled input $\Phi_{RA} \otimes \Phi_{R'A'}$ for
\(\cD_{d/(d+1)}^{\mathsf T}\otimes\id_m\) has output
\begin{equation}
    \omega_{RB} \otimes \Phi_{R'B'} = (\id_R \otimes \cD_{\frac{d}{d+1}}^{\mathsf T})(\Phi_{RA}) \otimes \Phi_{R'B'}
\end{equation}
with coherent information
\begin{equation}
    I(RR'\rangle BB')_{\omega \otimes \Phi}
    =
    \log\frac{2}{d+1}+\log m
    =
    \log\frac{2m}{d+1},    
\end{equation}
which is positive precisely when \(m>(d+1)/2\). Moreover, by additivity of \(C_E\), the same product input is optimal in \eqref{eq:C_E=I} for
the product channel. Thus, for $m>(d+1)/2$, the product channel $\cD^{\mathsf T}_{d/(d+1)}\otimes \id_m$ is low-noise in the sense of Definition~\ref{def:low-noise}. Hence, by using \eqref{eq:CID>=QID>=C_E}, additivity of $C_E$, Proposition~\ref{prop:main}, and Eq.~\eqref{eq:C(id)>Csim(id)}, we get the desired result:
\begin{align}
    C_{\ID}(\cD_{\frac{d}{d+1}}^\top\otimes\id_m)
    &\ge
    C_E(\cD_{\frac{d}{d+1}}^\top\otimes\id_m) \\
    &=
    C_E(\cD_{\frac{d}{d+1}}^\top)+C_E(\id_m) \\
    &>
    C_{\ID}(\cD_{\frac{d}{d+1}}^\top)+2\log m \\
    &=
    C_{\ID}(\cD_{\frac{d}{d+1}}^\top)+C_{\ID}(\id_m).
\end{align}
\end{proof}

\subsection{Comparison with other converse bounds}

The $C_{\ID}\leq C_E$ converse bound is closely related to, but should be
distinguished from, the Euclidean converse bounds of
\cite{singh2026gaussianmeanwidthstrong}. For a full-rank state
$\sigma\in\cD_+(B)$, consider the weighted Euclidean structure $W_{\sigma}:\Bsa{B}\to \Bsa{B}$:
\begin{equation}\label{eq:state-weighted-Euclidean-structure}
    W_\sigma(Y)
    :=
    \sigma^{-1/2}Y\sigma^{-1/2},
\end{equation}
with associated inner product and norm
\begin{equation}
    \langle Y,Z\rangle_\sigma
    :=
    \Tr\bigl(YW_\sigma(Z)\bigr), \qquad \norm{Y}_{\sigma} := \sqrt{\langle Y, Y \rangle_{\sigma} }.
\end{equation}
Use the product structure
$W_{\sigma^{\otimes n}}=W_\sigma^{\otimes n}$ on $\Bsa{B^{\otimes n}}$. Let $\cN:\B{A}\to \B{B}$ be a channel. We denote $Q_{\cN,\sigma}:=
Q_{\cN,W_\sigma}.$ (see Eq.~\eqref{eq:Q_N,W}). Then, H\"older's inequality shows that
$\norm{Y}_1\leq\norm{Y}_{W_\sigma}$, and equality holds for $Y=\sigma$. Hence, $K(W_\sigma)=1$ \cite[Lemma 2.1]{singh2026gaussianmeanwidthstrong}. Moreover, the corresponding singular
operator tensorizes \cite[Lemma 3.3]{singh2026gaussianmeanwidthstrong}:
\begin{equation}
    Q_{\cN^{\otimes n},\sigma^{\otimes n}}
    =
    Q_{\cN,\sigma}^{\otimes n}.
\end{equation}
Choosing this particular Euclidean structure in
Theorem~\ref{theorem:fully-optimized-one-shot-converse}, taking a second
logarithm, dividing by $n$, taking the limit $n\to \infty$, and then optimizing over $\sigma\in \cD_+(B)$ gives the state-weighted single-letter converse bound (see \cite[Theorem 3.8]{singh2026gaussianmeanwidthstrong}):
\begin{equation}\label{eq:state-weighted-Gaussian-converse}
    \limsup_{n\to\infty}
    \frac1n
    \log\log N_{(n,\lambda_1,\lambda_2)}(\cN)
    \leq
    \inf_{\sigma\in\cD_+(B)}
    \log\norm{Q_{\cN,\sigma}}_\infty.
\end{equation}
The $C_E$ converse improves upon this bound, as the following proposition illustrates. 

\begin{proposition}
For every quantum
channel $\cN:\B{A}\to \B{B}$,
\begin{equation}\label{eq:CE-below-state-weighted-Gaussian}
    C_E(\cN)
    \leq
    \inf_{\sigma\in\cD_+(B)}
    \log\norm{Q_{\cN,\sigma}}_\infty.
\end{equation}
\end{proposition}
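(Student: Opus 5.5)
The plan is to prove the inequality for each fixed $\sigma\in\cD_+(B)$ separately. I will bound the mutual information of every input by a Rényi-type quantity and identify that quantity with an expectation of $Q_{\cN,\sigma}$.

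\emph{Step 1: compute $Q_{\cN,\sigma}$ in Choi blocks.} Diagonalize $\sigma=\sum_i s_i\ketbra{i}$, so that $W_\sigma^{1/2}(\ketbra{i}{j})=\ketbra{i}{j}/(s_is_j)^{1/4}$. Apply the expansion \eqref{eq:Q-W-basis-expansion} of Lemma~\ref{lemma:Q-general-W-properties} to the same matrix-unit basis used in the proof of Lemma~\ref{lemma:mu-star-Imax}, together with $\cN^*(\ketbra{i}{j})=X_{ji}^{\mathsf T}$ from \eqref{eq:choi-output-blocks}. Exactly as in \eqref{eq:Q-sym-choi-blocks}, this should give
\begin{equation}
Q_{\cN,\sigma}^{\mathsf T}=\sum_{i,j}\frac{X_{ij}X_{ji}}{\sqrt{s_is_j}}
=\Tr_B\!\left[J_\cN(\iden_A\otimes\sigma^{-1/2})J_\cN(\iden_A\otimes\sigma^{-1/2})\right].
\end{equation}

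\emph{Step 2: pass from mutual information to a sandwiched R\'enyi-2 divergence.} By \eqref{eq:C_E=I} it suffices to bound $I(R:B)_\omega$ for each pure input. Up to the usual transpose and purification conventions, the output is $\omega_{RB}=(\rho^{1/2}\otimes\iden)J_\cN(\rho^{1/2}\otimes\iden)$ with $\omega_R=\rho$, where $\rho$ ranges over states on $A$. I will use
\begin{equation}
I(R:B)_\omega=\min_\tau D(\omega_{RB}\|\omega_R\otimes\tau_B)\le D(\omega_{RB}\|\rho\otimes\sigma)\le \widetilde D_2(\omega_{RB}\|\rho\otimes\sigma).
\end{equation}
The last inequality is the monotonicity of sandwiched R\'enyi divergences in $\alpha$. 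Write $\Gamma:=\rho^{-1/2}\otimes\sigma^{-1/2}$, defined on the support of $\rho$. Then $\widetilde D_2=\log\Tr[\omega\Gamma\omega\Gamma]$, and since $\omega\Gamma=(\rho^{1/2}\otimes\iden)J_\cN(\iden\otimes\sigma^{-1/2})$ this becomes
\begin{equation}
\log\sum_{i,j}\frac{\Tr\bigl(\rho^{1/2}X_{ij}\rho^{1/2}X_{ji}\bigr)}{\sqrt{s_is_j}}.
\end{equation}

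\emph{Step 3: compare with $\Tr(\rho\,Q^{\mathsf T})$.} The $\rho^{1/2}$ factors are split, so this is not yet $\Tr(\rho\,Q^{\mathsf T}_{\cN,\sigma})$. This is the step I expect to be the crux. I will handle it with Cauchy--Schwarz: since $X_{ji}=X_{ij}^\dagger$,
\begin{equation}
\bigl|\Tr(\rho^{1/2}X_{ij}\rho^{1/2}X_{ij}^\dagger)\bigr|\le\sqrt{\Tr(\rho X_{ij}X_{ij}^\dagger)\,\Tr(\rho X_{ij}^\dagger X_{ij})}.
\end{equation}
The second factor equals $\Tr(\rho X_{ji}X_{ji}^\dagger)$. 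Set $a_{ij}:=\Tr(\rho X_{ij}X_{ji})/\sqrt{s_is_j}$; the weight $1/\sqrt{s_is_j}$ is symmetric in $i,j$. Then AM--GM gives
\begin{equation}
\sum_{i,j}\frac{\Tr\bigl(\rho^{1/2}X_{ij}\rho^{1/2}X_{ji}\bigr)}{\sqrt{s_is_j}}\le\sum_{i,j}\frac{a_{ij}+a_{ji}}{2}=\Tr(\rho\,Q^{\mathsf T}_{\cN,\sigma})\le\norm{Q_{\cN,\sigma}}_\infty .
\end{equation}
Combining the three steps, $I(R:B)_\omega\le\log\norm{Q_{\cN,\sigma}}_\infty$ for every input. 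Maximizing over inputs and then minimizing over $\sigma$ yields \eqref{eq:CE-below-state-weighted-Gaussian}.

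The main care points are twofold. First, the transpose bookkeeping between $\rho$, $\rho^{\mathsf T}$ and $Q^{\mathsf T}$ must be consistent; since $\norm{Q^{\mathsf T}}_\infty=\norm{Q}_\infty$ and $\rho$ ranges over all states, any consistent choice works. Second, if $\rho$ is not full rank, I will either restrict to its support or use a continuity argument in $\rho$.
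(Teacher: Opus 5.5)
Your proof is correct and follows the same route as the paper: bound $I(R:B)_\omega\le D(\omega_{RB}\|\rho_R\otimes\sigma_B)\le\widetilde D_2(\omega_{RB}\|\rho_R\otimes\sigma_B)$, then show $\widetilde Q_2\le\Tr(\rho_A Q_{\cN,\sigma})\le\norm{Q_{\cN,\sigma}}_\infty$. The difference is only in execution: you restrict to pure inputs and compute $\widetilde Q_2$ and $Q_{\cN,\sigma}$ explicitly in Choi blocks over the eigenbasis of $\sigma$, so your Cauchy--Schwarz/AM--GM step (equivalent to $\tfrac12\norm{[\rho^{1/2},X_{ij}]}_2^2\geq0$) replaces the paper's basis-free argument via $\cT_\rho$, Kadison's inequality and the commutator bound for general mixed $\rho_{RA}$.
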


\begin{proof}
Let $\rho_{RA}$ be an input state, $\omega_{RB}:=(\id_R\otimes\cN)(\rho_{RA})$, so that $\rho_R=\omega_R$. Fix a full-rank state $\sigma\in \cD_+(B)$. Let $\{G_a\}_a$ be a $\sigma$-orthonormal basis in $(\Bsa{B}, \langle \cdot , \cdot \rangle_{\sigma})$. By definition \eqref{eq:Q_N,W} (see also Lemma~\ref{lemma:Q-general-W-properties}), we can write the singular operator as
\begin{equation}
    Q_{\cN,\sigma} = \sum_a \bigl( \cN^{*,\sigma}  (G_a) \bigr)^2 = \sum_a M_a^2,
\end{equation}
where $\cN^{*,\sigma}:= \cN^{*, W_{\sigma}}=\cN^*\circ W_{\sigma}$ is the $W_{\sigma}$-adjoint \eqref{eq:N-Wadjoint}, and $
M_a:=\cN^{*,\sigma}\!\left(G_a\right)\in \Bsa{A}$.
We expand $\omega_{RB}$ in the basis $\{G_a \}_a$:
\begin{equation}
\omega_{RB}=\sum_a X_a\otimes G_a,
\end{equation}
so that weighted orthonormality gives $X_a
=
\Tr_A\!\left[(\iden_R\otimes M_a)\rho_{RA}\right] =: \cT_\rho(M_a)$.
Consequently,
\begin{align}
\widetilde{Q}_2(\omega_{RB}\|\rho_R\otimes\sigma_B)
&:= \Tr \left[ \left( (\rho_R^{-\frac{1}{4}} \otimes \sigma_B^{-\frac{1}{4}}) \omega_{RB} (\rho_R^{-\frac{1}{4}} \otimes \sigma_B^{-\frac{1}{4}} )\right)^2 \right] \\ 
&= \sum_a
\Tr\!\left(
X_a\rho_R^{-1/2}X_a\rho_R^{-1/2}
\right),
\label{eq:Q2-expansion}
\end{align}
where inverse powers are taken on $\supp\rho_R$. Let $H_a:=\rho_R^{-1/2}X_a\rho_R^{-1/2}.$ The map
\begin{equation}
M\longmapsto
\rho_R^{-1/2}\cT_\rho(M)\rho_R^{-1/2}
\end{equation}
is positive and unital on $\supp\rho_R$. Hence, by Kadison's
inequality \cite[Theorem 2.3.2]{bhatia2015positive}, we can write $H_a^2
\leq
\rho_R^{-1/2}\cT_\rho(M_a^2)\rho_R^{-1/2}.$ Moreover,
\begin{align}
\Tr\!\left(
X_a\rho_R^{-1/2}X_a\rho_R^{-1/2}
\right)
&=
\Tr\!\left(
\rho_R^{1/2}H_a\rho_R^{1/2}H_a
\right) \\
&\leq
\Tr(\rho_R H_a^2) \leq
\Tr\!\left(\cT_\rho(M_a^2)\right) =
\Tr(\rho_A M_a^2),
\end{align}
where the first inequality follows from
\begin{equation}
\Tr(\rho_RH_a^2)
-
\Tr\!\left(\rho_R^{1/2}H_a\rho_R^{1/2}H_a\right)
=
\frac12
\norm{[\rho_R^{1/2},H_a]}_2^2
\geq 0.
\end{equation}
Together with \eqref{eq:Q2-expansion}, this yields
\begin{align}
\widetilde Q_2(\omega_{RB}\|\rho_R\otimes\sigma_B) \leq
\sum_a\Tr(\rho_A M_a^2) =
\Tr\!\left(\rho_AQ_{\cN,\sigma}\right) \leq
\norm{Q_{\cN,\sigma}}_\infty.
\end{align}
Therefore, using the ordering between the quantum relative entropy and sandwiched-R\'enyi $2$-divergence $D\leq \widetilde{D}_2$ \cite{MullerLennert2013sandwich, Beigi2013}, we obtain
\begin{align}
I(R:B)_\omega
\leq
D(\omega_{RB}\|\rho_R\otimes\sigma_B) 
&\leq
\widetilde D_2(\omega_{RB}\|\rho_R\otimes\sigma_B) \\ 
&=:
\log\widetilde Q_2(\omega_{RB}\|\rho_R\otimes\sigma_B) 
\leq
\log\norm{Q_{\cN,\sigma}}_\infty.
\end{align}
Since the input $\rho$ and the weighing state $\sigma$ were arbitrary, we get the desired claim.
\end{proof}

Comparison with the fully optimized Euclidean bound \cite[Theorem 5.2]{singh2026gaussianmeanwidthstrong}
\begin{equation}
 \limsup_{n\to\infty}
    \frac1n
    \log\log N_{(n,\lambda_1,\lambda_2)}(\cN) \leq   \limsup_{n\to\infty}
    \frac1n\log\mu_*(\cN^{\otimes n})
\end{equation}
is unclear, since we do not currently have a universal ordering between this unsmoothed regularized quantity and $C_E(\cN)$. In the proof of Theorem~\ref{theorem:CID-leq-CE}, the Euclidean converse is instead applied to a channel that is diamond-norm close to $\cN^{\otimes n}$. The estimate
$\mu_*(\cM)\leq2^{I_{\max}(\cM)+1}$ and the smooth max-information AEP then turn this \emph{smoothed} Euclidean converse into the single-letter bound $C_E(\cN)$. The Euclidean quantities nevertheless retain finer finite-block and channel-image geometric information, and might yield sharper channel-specific bounds than the entanglement-assisted capacity.

The $C_{\ID}\leq C_E$ also improves upon the quantum capacity converse bound from \cite{Atif2024CIDstrongconverse}.

\begin{proposition}
    For every quantum
channel $\cN:\B{A}\to \B{B}$,
\begin{equation}
    C_E(\cN)
    \leq \log d_A + Q^{\dagger}(\cN),
\end{equation}
where $Q^{\dagger}$ is the strong converse quantum capacity \cite{Atif2024CIDstrongconverse}.
\end{proposition}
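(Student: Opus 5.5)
The plan is to compare the single-letter formulas directly and never touch identification codes. By \eqref{eq:C_E=I}, $C_E(\cN)=\max_{\phi}I(R:B)_\omega$ over pure inputs $\ket{\phi}_{RA}$, where $\omega_{RB}=(\id_R\otimes\cN_{A\to B})(\phi_{RA})$. The task is therefore to bound $I(R:B)_\omega$ for each fixed input by $\log d_A$ plus a quantity that sits below $Q^\dagger(\cN)$.

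The first step is the decomposition of mutual information as marginal entropy plus coherent information. Directly from the definitions in Section~\ref{sec:CID-QID-CE},
\begin{equation*}
I(R:B)_\omega \;=\; S(R)_\omega + S(B)_\omega - S(RB)_\omega \;=\; S(R)_\omega + I(R\rangle B)_\omega .
\end{equation*}
Since $\omega_R=\phi_R$ and $\phi$ is pure, $\phi_R$ has the same spectrum as $\phi_A$. Its rank is thus at most $d_A$, which gives $S(R)_\omega\le\log d_A$. Hence, for every input,
\begin{equation*}
I(R:B)_\omega\le \log d_A + I(R\rangle B)_\omega \le \log d_A + \max_{\phi'} I(R\rangle B)_{\omega'} .
\end{equation*}
Maximizing the left-hand side over $\phi$ then yields $C_E(\cN)\le \log d_A + Q^{(1)}(\cN)$, where $Q^{(1)}$ denotes the maximal coherent information.

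The second step chains the standard capacity inequalities. By the Lloyd--Shor--Devetak theorem (the hashing bound), every rate below the single-shot coherent information is achievable for quantum transmission, so $Q^{(1)}(\cN)\le Q(\cN)$. The strong converse quantum capacity $Q^\dagger$ of \cite{Atif2024CIDstrongconverse} is defined so that any rate admitting vanishing (or even merely bounded-away-from-one) error is below it. Consequently $Q(\cN)\le Q^\dagger(\cN)$, and combining the two steps gives $C_E(\cN)\le\log d_A+Q^\dagger(\cN)$.

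The only point requiring care is the last inequality $Q\le Q^\dagger$, which depends on the precise definition of $Q^\dagger$ used in \cite{Atif2024CIDstrongconverse}. I would check there that $Q^\dagger$ is defined as the infimum of rates above which the fidelity tends to zero. With that definition, any achievable rate lies below it, so the bound holds by definition. If instead $Q^\dagger$ is meant as a particular computable upper bound on the strong converse rate, the inequality still holds, since any valid strong converse bound dominates $Q$ and hence the coherent information.
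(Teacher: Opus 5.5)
Your proposal is correct and follows the paper's proof essentially verbatim: decompose $I(R:B)_\omega=S(R)_\omega+I(R\rangle B)_\omega$, bound $S(R)_\omega\leq\log d_A$, and chain $Q^{(1)}(\cN)\leq Q(\cN)\leq Q^{\dagger}(\cN)$. The paper obtains $Q^{(1)}\leq Q$ from the regularized formula $Q(\cN)=\sup_n\frac1n Q^{(1)}(\cN^{\otimes n})$ (the $n=1$ term), which is the same Lloyd--Shor--Devetak fact you cite.
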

\begin{proof}
For an input pure state $|\phi\rangle_{RA}$, set $\rho_{RB}=(\id_R\otimes\cN_{A\to B})(\phi_{RA})$. Then, 
\begin{equation}
    I(R:B)_{\rho} = S(R)_{\rho} + I(R\rangle B)_{\rho} \leq \log d_A + Q^{(1)}(\cN), 
\end{equation}
where 
\begin{equation}
    Q^{(1)}(\cN)= \sup_{\phi}
I(R\rangle B)_\rho 
\end{equation}
is the channel's coherent information, which regularizes to the quantum capacity: 
\begin{equation}
    Q(\cN) = \sup_{n\in \N} \frac{1}{n} Q^{(1)}(\cN^{\otimes n}) \leq Q^{\dagger}(\cN).
\end{equation}
\end{proof}

We compare the known converse and achievability bounds for the qubit depolarizing channel $\cD_p: \M{2}\to \M{2}$ in Figure~\ref{fig:qubit-depolarizing}, illustrating how the $C_E$ converse improves upon all the previously known converse bounds in the full parameter range $0\leq p\leq 1$. For detailed computation of these bounds, we refer the reader to \cite[Section 4.1.1]{singh2026gaussianmeanwidthstrong}.

\begin{figure}[ht]
    \centering
    \includegraphics[width=0.69\linewidth]{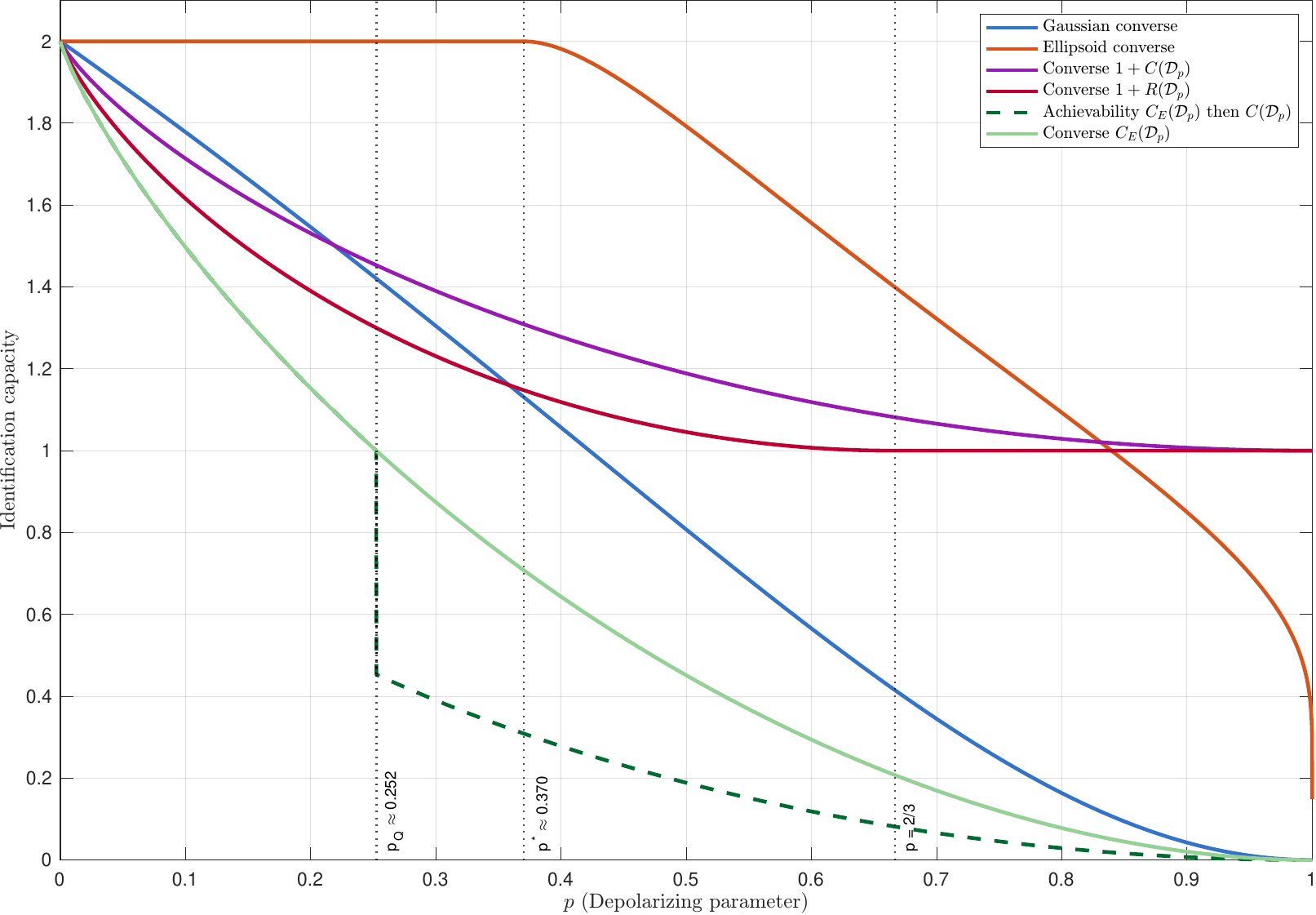}
    \caption{Strong converse and achievability bounds on the classical identification capacity of the qubit depolarizing channel $\cD_p$ (see Eq.~\eqref{eq:depol}) for $0\leq p \leq 1$. The light green curve shows the $C_E$ converse bounds from Theorem~\ref{theorem:CID-leq-CE}. The blue curve shows the state-weighted Gaussian converse bound from \cite[Theorem 3.8]{singh2026gaussianmeanwidthstrong} (see \eqref{eq:state-weighted-Gaussian-converse}). The orange curve shows the Ellipsoid converse bound from \cite{ye2026strongconverseboundsclassical}. The purple and red curves show the classical capacity \cite{ye2026strongconverseboundsclassical} and quantum capacity \cite{Atif2024CIDstrongconverse} converse bounds, respectively. The dashed curve shows the achievability bound from \cite{Hayden2012QID-achievability, Winter2013survey-ID} (see \eqref{eq:CID>=C}, \eqref{eq:CID>=QID>=C_E}). Note that the achievability and the $C_E$ converse coincide for $p\lesssim 0.252$ in accordance with Corollary~\ref{cor:low-noise}.}
    \label{fig:qubit-depolarizing}
\end{figure} 

\section{Discussion}
\label{sec:discussion}

In this paper, we have established a universal strong converse bound on classical identification: $C_{\ID}\leq C_E$. This improves upon the two main converse bounds from the literature, namely the capacity-based converse from \cite{Atif2024CIDstrongconverse} and the recently obtained state-weighted Euclidean converse from \cite{singh2026gaussianmeanwidthstrong}. Its ordering relative to the fully optimized unsmoothed Euclidean converse from \cite[Theorem 5.2]{singh2026gaussianmeanwidthstrong} is presently
unknown. For low-noise channels (Definition~\ref{def:low-noise}), our bound is achievable \cite{Hayden2012QID-achievability, Winter2013survey-ID}, implying $C_{\ID}=C_E$ for such channels. However, for highly noisy channels, we have shown that the bound can be strict $C_{\ID}< C_E$.

The image invariance property of identification that we used in the transpose-depolarizing example (Proposition~\ref{prop:main}) leads to a more intrinsic formulation of the bounds studied in this paper. We formally note this invariance property below.

\begin{lemma}[Image-set invariance of \(C_{\ID}\)]
\label{lemma:CID-image-invariance}
If two channels \(\cN,\cM:\B{A}\to\B{B}\) satisfy
\(I_n(\cN)=I_n(\cM)\) for some \(n\), then, for every
\(\lambda_1,\lambda_2\in[0,1)\),
\begin{equation}
N_{(n,\lambda_1,\lambda_2)}(\cN)
=
N_{(n,\lambda_1,\lambda_2)}(\cM).
\end{equation}
Consequently, if the equality of image sets holds for every \(n\), then
\(C_{\ID}(\cN)=C_{\ID}(\cM)\).
\end{lemma}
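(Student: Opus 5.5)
The plan is to show that an identification code for $\cN$ can be converted directly into one for $\cM$ with the same parameters. The key observation is that the defining conditions \eqref{eq:lambda_1} and \eqref{eq:lambda_2} in Definition~\ref{def:IDcodes} involve the encoding states $\rho_i$ only through their outputs $\omega_i=\cN^{\otimes n}(\rho_i)\in I_n(\cN)$. Equivalently, an $(n,N,\lambda_1,\lambda_2)$ code for $\cN$ is the same thing as a family $\{(\omega_i,D_i)\}_{i\in[N]}$ with $\omega_i\in I_n(\cN)$, $0\leq D_i\leq \iden_{B^{\otimes n}}$, $\Tr(\omega_iD_i)\geq 1-\lambda_1$, and $\Tr(\omega_iD_j)\leq\lambda_2$ for $i\neq j$.

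Concretely, start from an $(n,N,\lambda_1,\lambda_2)$ code $\{(\rho_i,D_i)\}$ for $\cN$. Since $\omega_i\in I_n(\cN)=I_n(\cM)$, for each $i$ we may choose some $\rho_i'\in\cD(A^{\otimes n})$ with $\cM^{\otimes n}(\rho_i')=\omega_i$. No measurable or continuous selection is needed, because $N$ is finite. We keep the same decoding effects $D_i$. Then $\Tr(\cM^{\otimes n}(\rho_i')D_j)=\Tr(\omega_iD_j)$ for all $i,j$, so $\{(\rho_i',D_i)\}$ is an $(n,N,\lambda_1,\lambda_2)$ code for $\cM$. This gives $N_{(n,\lambda_1,\lambda_2)}(\cN)\leq N_{(n,\lambda_1,\lambda_2)}(\cM)$, and the reverse inequality follows by symmetry.

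For the capacity statement, if $I_n(\cN)=I_n(\cM)$ for every $n$, then the sequences $N_{(n,\lambda_1,\lambda_2)}$ coincide for all $n$ and all admissible $\lambda_1,\lambda_2$. Since $C_{\ID}$ is defined purely as an $\inf$ and $\liminf$ over these numbers, $C_{\ID}(\cN)=C_{\ID}(\cM)$.

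The argument is essentially definitional, so there is no real obstacle. The only point to check is that the maximum defining $N_{(n,\lambda_1,\lambda_2)}$ is handled consistently. If for some parameters arbitrarily large codes exist, so that the maximum is infinite, the code-transfer argument shows that the same holds for $\cM$, and the equality still holds.
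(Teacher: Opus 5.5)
Your proposal is correct and follows the same approach as the paper: the paper's one-line proof notes that the conditions \eqref{eq:lambda_1},\eqref{eq:lambda_2} depend only on the image $I_n(\cN)$. You make this explicit by choosing preimages under $\cM^{\otimes n}$, keeping the same effects, and arguing symmetrically. Your remark on possibly infinite code sizes when $\lambda_1+\lambda_2\geq 1$ is a harmless extra.
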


\begin{proof}
The claim follows since the defining conditions \eqref{eq:lambda_1},\eqref{eq:lambda_2} of an $(n,N,\lambda_1,\lambda_2)$ ID code for a channel $\cN$ only depends on the channel image $I_n(\cN)$ \eqref{eq:N-image}. 
\end{proof}

For channels $\cN,\cM$ with a common output space, define the tensor-image preorder
\begin{align}
    \cN\preceq_{\mathrm{im}}\cM
    &\quad :\Longleftrightarrow\quad
    I_n(\cN)\subseteq I_n(\cM)
    \quad\text{for every }n\in\N.\label{eq:image-preorder}
\end{align}
The proof of Lemma~\ref{lemma:CID-image-invariance} immediately gives the stronger monotonicity statement
\begin{equation}\label{eq:CID-image-monotonicity}
    \cN\preceq_{\mathrm{im}}\cM
    \quad\Longrightarrow\quad
    N_{(n,\lambda_1,\lambda_2)}(\cN)
    \leq
    N_{(n,\lambda_1,\lambda_2)}(\cM)
\end{equation}
for every $n\in \N$ and $\lambda_1,\lambda_2\in [0,1)$, and hence
$C_{\ID}(\cN)\leq C_{\ID}(\cM)$. This permits the achievability \eqref{eq:CID>=C}, \eqref{eq:CID>=QID} and converse (Theorem~\ref{theorem:CID-leq-CE}) bounds to be optimized over sub- and super-realizations of the asymptotic output geometry. Define
\begin{align}
    \underline{C}_{\mathrm{im}}(\cN)
    &:=
    \sup_{k\geq1}\frac1k
    \sup_{\cM:\,\cM\preceq_{\mathrm{im}}\cN^{\otimes k}}
    \max\bigl\{C(\cM),Q_{\ID,v}(\cM)\bigr\}, \label{eq:Cim-lower} \\
    \overline{C}_{\mathrm{im}}(\cN)
    &:=
    \inf_{k\geq1}\frac1k
    \inf_{\cM:\,\cN^{\otimes k}\preceq_{\mathrm{im}}\cM}
    C_E(\cM). \label{eq:Cim-upper}
\end{align}
In these optimizations, $\cM$ may have an arbitrary input space but has the same output space as the block channel $\cN^{\otimes k}$. These quantities give the following image-optimized bounds.

\begin{theorem}\label{theorem:im-opt-bounds}
    Let $\cN:\B{A}\to \B{B}$ be a quantum channel. Then,  
    \begin{equation}\label{eq:Cim-sandwich}
    \max\bigl\{C(\cN),Q_{\ID,v}(\cN)\bigr\}
    =
    \underline{C}_{\mathrm{im}}(\cN)
    \leq
    C_{\ID}(\cN)
    \leq
    \overline{C}_{\mathrm{im}}(\cN)
    \leq
    C_E(\cN).
\end{equation}
\end{theorem}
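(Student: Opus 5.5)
The plan is to prove the chain in \eqref{eq:Cim-sandwich} link by link. The tools are the image monotonicity \eqref{eq:CID-image-monotonicity}, a blocking argument for $C_{\ID}$, and Theorem~\ref{theorem:CID-leq-CE}.

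\emph{Step 1 (the two outer inequalities).} Choosing $k=1$ and $\cM=\cN$ is admissible in both \eqref{eq:Cim-lower} and \eqref{eq:Cim-upper}, because $\cN\preceq_{\mathrm{im}}\cN$ holds trivially. This gives $\max\{C(\cN),Q_{\ID,v}(\cN)\}\leq \underline{C}_{\mathrm{im}}(\cN)$ and $\overline{C}_{\mathrm{im}}(\cN)\leq C_E(\cN)$.

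\emph{Step 2 (middle inequalities).} Fix $k$ and $\cM$ with $\cM\preceq_{\mathrm{im}}\cN^{\otimes k}$. Since $I_n(\cM)\subseteq I_n(\cN^{\otimes k})=I_{nk}(\cN)$, monotonicity \eqref{eq:CID-image-monotonicity} gives $N_{(n,\lambda_1,\lambda_2)}(\cM)\leq N_{(nk,\lambda_1,\lambda_2)}(\cN)$. Taking $\frac{1}{nk}\log\log$ and $\liminf_n$, and then combining with \eqref{eq:CID>=C} and \eqref{eq:CID>=QID} applied to $\cM$, yields
\[
\frac1k\max\{C(\cM),Q_{\ID,v}(\cM)\}\leq \frac1k C_{\ID}(\cM)\leq C_{\ID}(\cN).
\]
This requires care: the $\liminf$ along the subsequence $nk$ dominates the full $\liminf$, and the direction is correct. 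The lower half of the right-hand side is $\liminf_n \frac1{nk}\log\log N_{(nk)}(\cN)\ge \liminf_m \frac1m\log\log N_{(m)}(\cN)$, so as written it bounds $C_{\ID}(\cN)$ only from below. To compare with $C_{\ID}(\cN)$ itself, I would use monotonicity of $N_{(m,\lambda_1,\lambda_2)}(\cN)$ in $m$. Extra channel uses can be ignored by feeding a fixed state and tracing them out, which keeps the tests valid. Then for $nk\le m<(n+1)k$ we have $N_{(m)}\ge N_{(nk)}$, so the rates along the subsequence and along the full sequence agree, and $C_{\ID}(\cM)/k\le C_{\ID}(\cN)$ holds rigorously. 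Taking suprema gives $\underline{C}_{\mathrm{im}}\le C_{\ID}$.

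For the upper bound, take $\cN^{\otimes k}\preceq_{\mathrm{im}}\cM$. Then $N_{(nk)}(\cN)\le N_{(n)}(\cM)$. Theorem~\ref{theorem:CID-leq-CE} bounds the $\limsup$ of $\frac1n\log\log N_{(n)}(\cM)$ by $C_E(\cM)$, so
\[
\liminf_n\frac1{nk}\log\log N_{(nk)}(\cN)\le \frac1k C_E(\cM).
\]
By the monotonicity in $m$ noted above, this subsequence $\liminf$ dominates $\liminf_m$, giving $C_{\ID}(\cN)\le \frac1kC_E(\cM)$. Taking the infimum gives $C_{\ID}\le\overline{C}_{\mathrm{im}}$.

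\emph{Step 3 (equality on the left).} This is the main obstacle. We need $\underline{C}_{\mathrm{im}}(\cN)\le \max\{C(\cN),Q_{\ID,v}(\cN)\}$, i.e.\ sub-realizations cannot beat the two achievability rates. Each $\cM\preceq_{\mathrm{im}}\cN^{\otimes k}$ satisfies $I_n(\cM)\subseteq I_{nk}(\cN)$, so every input state of $\cM^{\otimes n}$ can be replaced by some input of $\cN^{\otimes nk}$ with the same output. This should give a pointwise (visible) encoding that simulates $\cM$'s outputs. Visible quantum ID codes and classical transmission codes only require the output states $\cM^{\otimes n}(\cE(\psi))$ and the decoders. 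Hence an $\eps$-visible quantum ID code for $\cM^{\otimes n}$ becomes one for $\cN^{\otimes nk}$ with the same $K$, and a transmission code transfers likewise. This gives $Q_{\ID,v}(\cM)\le kQ_{\ID,v}(\cN)$ and $C(\cM)\le kC(\cN)$, again using monotonicity in blocklength to pass from subsequences to full $\liminf$s. Blindness is not needed here, which is exactly why the visible capacity appears. The delicate points are confirming that the visible definition imposes no linearity on $\cE$, and handling the blocklength interpolation for $Q_{\ID,v}$: padding with traced-out uses is harmless for this purpose.
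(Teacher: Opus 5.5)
Your proposal is correct and follows the paper's proof: image monotonicity of $C_{\ID}$ plus reblocking and Theorem~\ref{theorem:CID-leq-CE} applied to $\cM$ give the upper bound. For the equality on the left, both you and the paper use that $C$ and $Q_{\ID,v}$ are image-monotone, since transmission and visible quantum ID codes depend only on the output states (no linearity is imposed on visible encoders). Your padding argument for monotonicity of code sizes in the blocklength simply makes explicit the reblocking identities (e.g.\ $C_{\ID}(\cN^{\otimes k})=kC_{\ID}(\cN)$ and its analogues for $C$ and $Q_{\ID,v}$), which the paper asserts without detail.
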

\begin{proof}
    Reblocking shows $C_{\ID}(\cN^{\otimes k})=kC_{\ID}(\cN)$. If $\cN^{\otimes k}\preceq_{\mathrm{im}}\cM$, then image monotonicity \eqref{eq:CID-image-monotonicity} and the universal $C_E$ converse (Theorem~\ref{theorem:CID-leq-CE}) give
\begin{equation}
    kC_{\ID}(\cN)
    =C_{\ID}(\cN^{\otimes k})
    \leq C_{\ID}(\cM)
    \leq C_E(\cM),
\end{equation}
which proves the upper bound after optimizing. For the lower bound, notice that since the defining conditions of an $n$-shot classical transmission code \eqref{eq:lambda} and a visible quantum identification code \eqref{eq:QID-eps} only depend on the image $I_n(\cN)$, both $C$ and $Q_{\ID, v}$ are themselves image-monotone:
\begin{equation}
    \cN \preceq_{\im} \cM \implies C(\cN) \leq C(\cM) , \,\, Q_{\ID, v}(\cN) \leq Q_{\ID, v}(\cM).
\end{equation}
Hence, image optimization cannot improve the known achievability bounds \eqref{eq:CID>=C}, \eqref{eq:CID>=QID}: $\underline{C}_{\mathrm{im}}(\cN)= \max \{ C(\cN) , Q_{\ID, v}(\cN)\}$.
\end{proof}

It is unclear how difficult it is to actually compute the image-optimized upper bound \eqref{eq:Cim-upper} in practice. For low-noise channels (Definition~\ref{def:low-noise}), we have already seen that the lower and upper bounds collapse to $C_E$ (Corollary~\ref{cor:low-noise}). Outside this
regime, image optimization can strictly improve the $C_E$ converse, as illustrated by the transpose-depolarizing channel (Proposition~\ref{prop:main}). Understanding when this image-optimized converse is tight is therefore a natural direction for further investigation.

\section*{Acknowledgements}

I thank Andreas Winter and Pau Colomer for several insightful discussions on identification. Motivated by these discussions. I asked ChatGPT 5.6 Sol to use the geometric Euclidean converse functional $\mu_\star$ from \cite[Eq. (5.24)]{singh2026gaussianmeanwidthstrong} to try to establish the converse bound $C_{\ID} \leq C_E$. After a few iterations of guided discussions, the model came up with the right Euclidean structure \eqref{eq:W-bures} which connects $\mu_\star$ to the channel max information $I_{\max}$ \cite{Fang2020channel-max-information} via Lemma~\ref{lemma:mu-star-Imax}, after which smoothing combined with a straightforward application of the asymptotic equipartition property \eqref{eq:channel-Imax-AEP} yields the desired bound. I have verified all AI-assisted material and take full responsibility for this work.

I acknowledge support from the Deutsche Forschungsgemeinschaft (DFG, German Research Foundation) via TRR 352 – Project-ID 470903074. 

\bibliographystyle{plainurl}
\bibliography{references}

\end{document}